\lstdefinelanguage{Solidity}{
    keywords={function, returns, private, pure, memory, bytes32, uint256, return, new, for, if, else},
    sensitive=true,
    comment=[l]{//},
    morecomment=[s]{/*}{*/},
    morestring=[b]",
    morestring=[b]',
    morestring=[b]`,
}
\newtheorem{theorem}{Theorem}
\newtheorem{definition}{Definition}
\newcommand{\Adv}{\mathcal{A}}
\def\BibTeX{{\rm B\kern-.05em{\sc i\kern-.025em b}\kern-.08em
    T\kern-.1667em\lower.7ex\hbox{E}\kern-.125emX}}
\begin{document}

\title{Commit-Reveal²: Securing Randomness Beacons with Randomized Reveal Order in Smart Contracts
}

\author{%
  Suhyeon Lee, Euisin Gee, Najmeh Soroush, Muhammed Ali Bingol and Kaibin Huang%
  \thanks{Suhyeon Lee, Euisin Gee, Najmeh Soroush, and Muhammed Ali Bingol are with Tokamak Network (email: \{suhyeon, justin, nil, muhammed\} [at] tokamak.network).}%
}

\maketitle

\begin{abstract}

Simple commit--reveal beacons are vulnerable to last-revealer strategies, and existing descriptions often leave accountability and recovery mechanisms unspecified for practical deployments. We present \emph{Commit--Reveal\textsuperscript{2}}, a layered design for blockchain deployments that cryptographically randomizes the final reveal order, together with a concrete accountability and fallback mechanism that we implement as smart-contract logic. The protocol is architected as a hybrid system, where routine coordination runs off chain for efficiency and the blockchain acts as the trust anchor for commitments and the final arbiter for disputes. Our implementation covers leader coordination, on-chain verification, slashing for non-cooperation, and an explicit on-chain recovery path that maintains progress when off-chain coordination fails. We formally define two security goals for distributed randomness beacons, unpredictability and bit-wise bias resistance, and we show that Commit--Reveal\textsuperscript{2} meets these notions under standard hash assumptions in the random-oracle model. In measurements with small to moderate operator sets, the hybrid design reduces on-chain gas by more than 80\% compared to a fully on-chain baseline. We release a publicly verifiable prototype and evaluation artifacts to support replication and adoption in blockchain applications.

\end{abstract}

\begin{IEEEkeywords}
Blockchain, Distributed Randomness, Commit-Reveal, Smart Contract, Accountability
\end{IEEEkeywords}

\section{Introduction} \label{sec:intro}

Secure randomness generation plays a critical role in blockchain systems. Randomness is essential for validator selection in Proof-of-Stake (PoS) blockchains, cryptographic protocols like Zero-Knowledge Proofs (ZKP), and determining fair execution orders in decentralized finance (DeFi). To systematically provide secure and unbiased randomness, Distributed Randomness Beacons (DRBs) have been proposed as foundational components in blockchain systems. A DRB is a mechanism that generates randomness in a decentralized manner, ensuring that no single participant can control or predict the output. By leveraging multiple participants and cryptographic techniques, DRBs aim to deliver randomness that is both unpredictable and tamper-resistant, which is crucial for maintaining the integrity of various blockchain applications.

However, securely generating randomness in a distributed environment is highly challenging. Randomness generation mechanisms must simultaneously satisfy safety, which ensures resistance to manipulation, and liveness, which guarantees availability when requested. Achieving both properties perfectly is impractical; instead, the degree to which each property is satisfied critically impacts the security and efficiency of blockchain systems.

Among these mechanisms, the Commit-Reveal mechanism is widely adopted due to its structural simplicity and ease of implementation. While it provides strong safety guarantees, it suffers from a critical weakness in liveness: if any participant fails to reveal their secret, randomness generation halts. The \textit{Last Revealer Attack} exacerbates this issue, as the last participant can strategically decide whether to reveal in order to generate a random number or not. This attack becomes particularly concerning when the sequence or outcome determined by the randomness directly impacts financial incentives, giving attackers strong economic motivation.

To address these issues, this paper introduces the \textit{Commit-Reveal²} mechanism, an efficient and secure Distributed Randomness Beacon that can be directly implemented via smart contracts. Commit-Reveal² minimizes opportunities for last revealer attacks by randomizing the reveal order of secret values. Specifically, the mechanism layers two Commit-Reveal processes: the first phase generates randomness used to determine the reveal order in the second phase. This approach reduces the attacker’s ability to manipulate the order while maintaining the same level of security with lower security deposits. Additionally, we introduce an overlapped-commitment structure and optimize communication costs by offloading key reveal processes to off-chain networks. Our primary contributions are:

\begin{itemize}
  \item \textbf{Layered Commit--Reveal$^{2}$.}
  We design a two-layer commit--reveal that cryptographically randomizes the final reveal order. Under standard hash assumptions, this makes it computationally hard for an adversary to ensure a last-revealer position ex ante, thereby weakening the core leverage of the attack. %

  \item \textbf{Gas-Efficient Hybrid Implementation.}
  We architect the protocol as a hybrid on-chain/off-chain system: routine steps occur off chain for efficiency, while the chain serves as a trust anchor for commitments and as the arbiter for disputes. In our measurements for small to moderate operator sets, this reduces gas usage by over 80\% compared to a fully on-chain baseline. %

  \item \textbf{Accountability and Fallback.}
  We implement an on-chain framework that adjudicates participant withholding and leader failure, and offers refunds to consumers during pauses, preserving auditability and aiming for conditional liveness. %

  \item \textbf{Formal Security Framework.}
  We formalize \emph{unpredictability} and \emph{bit-wise bias-resistance} for DRBs and show that Commit--Reveal$^{2}$ satisfies these notions under the random-oracle model. %

  \item \textbf{Public, Verifiable Prototype.}
  We release a complete prototype of Commit--Reveal$^{2}$ and its accountability logic, together with the code and datasets used in our evaluation, to facilitate replication and adoption. %
\end{itemize}

The remainder of this paper is structured to unfold our design. Section~\ref{sec:related} surveys related work. Section~\ref{sec:threat model} defines our security model. Section~\ref{sec:protocol} details the core Commit-Reveal\textsuperscript{2} protocol. Section~\ref{sec:fallback} elaborates on the accountability and fallback mechanisms for the practical use of off-chain network. Section~\ref{sec:security-analysis} and Section~\ref{sec:implementation} present the security analysis and implementation, respectively. Section~\ref{sec:cost analysis} analyzes the overall cost of the proposed protocol and the practicality of griefing attacks. Section~\ref{sec:discussion} discusses extensions, and Section~\ref{section:conclusion} concludes the paper.

\section{Related Works} \label{sec:related}

DRB are crucial primitives in a wide range of blockchain applications, including validator selection in Proof-of-Stake (PoS) protocols, cryptographic schemes such as Zero-Knowledge Protocols (ZKPs), and decentralized finance (DeFi). Despite extensive research, existing solutions often involve trade-offs among unbiasability, liveness, cost-efficiency, and scalability.

\textbf{Commit-Reveal and VDF-Based Schemes.}  
One class of DRBs relies on commit-reveal protocols, where participants first commit to a secret and only reveal it after other commitments are sealed. Simple implementations, such as RANDAO~\cite{randao2019}, provide conceptual clarity and low complexity, but suffer from poor liveness or manipulation~\cite{alpturer2024optimal}, notably the \textit{last revealer attack} where the final participant strategically withholds their secret. To mitigate such issues and ensure unpredictability over time, Verifiable Delay Functions (VDFs)~\cite{boneh2018verifiable} have been integrated with commit-reveal structures, allowing the beacon to \textit{recover} from incomplete reveals and maintain fairness. Protocols like Bicorn~\cite{choi2023bicorn}, which incorporate VDFs to prevent adversarial grinding, illustrate the potential of combining commit-reveal approaches with cryptographic delay primitives. Such hybrid solutions leverage the enforced temporal structure of VDFs to improve both safety and liveness without excessively relying on costly on-chain incentives.

\textbf{VRF and Threshold Cryptography-Based Schemes.}  
Another family of DRB solutions employs Verifiable Random Functions (VRFs) and threshold cryptographic techniques to achieve unbiasable, non-interactive randomness. VRF-based constructions, as utilized in Algorand~\cite{gilad2017algorand} and Ouroboros~\cite{kiayias2017ouroboros}, enable any party to produce a publicly verifiable random output, ensuring integrity and scalability. Threshold-based schemes like RandHound~\cite{syta2017randhound} and HydRand~\cite{schindler2020hydrand} rely on Publicly Verifiable Secret Sharing (PVSS) to distribute trust and ensure robustness against colluding adversaries. By spreading control over multiple participants, these protocols achieve high fault tolerance and transparency, although they may still face communication overhead and incentive alignment issues.

Despite these diverse approaches, research on DRB as smart contracts is limited. Even though efficient DRB designs and implementation in smart contract levels \cite{choi2023bicorn, abidha2024gas, barakbayeva2024srng, lee2024implementation} are proposed, there is a lack of code disclosure, and economic structure analysis for rational use in DApp using smart contracts. Also, these works are all related to DRBs using VDF, which is not considered to be in a practical level considering the gas cost and ASIC chips provision. Our proposed mechanism, \emph{Commit-Reveal$^2$}, aims to address these limitations by combining a layered commit-reveal process with randomized reveal ordering.

\section{Security Model}
\label{sec:threat model}

This section states the threat model for our distributed randomness beacon and fixes the assumptions under which the protocol is analyzed. 

\subsection{System Overview and Assumptions}

We consider a layered Commit--Reveal$^{2}$ design that operates off chain in the common case and shifts to an on-chain fallback when coordination fails, with a designated leader coordinating routine steps. Parties include operators (one of whom is the leader), consumers who request randomness, and the on-chain contract that enforces verification and penalties. Adversaries may control any subset of operators, including the leader, and can withhold messages off chain, submit late within on-chain windows, or attempt to post inconsistent data; all such behavior is adjudicated by the fallback logic on chain.

We assume standard cryptographic and communication primitives. Hash functions are modeled in the random-oracle style in our proofs, commitments and messages are authenticated using EIP-712 signatures and verified by the contract, and off-chain channels between the leader and operators are authenticated. At least one operator can be honest in a round, and deposits with slashing provide accountability when faults are detected. Consumers should be protected by a refund option while service is paused. These assumptions reflect how the hybrid protocol anchors security on chain while minimizing routine on-chain work.

\subsection{Security Goals and Definitions}

With the setting fixed, we now evaluate security along two axes: 
\emph{unpredictability} of future outputs, and \emph{bias-resistance}, formalized as bit-wise unpredictability. 

Let $\lambda \in \mathbb{N}$ be the security parameter. 
A probabilistic polynomial-time (PPT) adversary $\mathcal{A}$ observes the public transcript and past beacon outputs 
$\Omega_1,\ldots,\Omega_{\tau-1}$, and may maintain arbitrary state $st_{\tau-1}$. 
Unless otherwise noted, probabilities are taken over the randomness of the beacon and of the adversary.
We adopt the standard formulations in \cite{meng2024rondo,raikwar2022sok}.

\begin{definition}[Unpredictability]\label{def:unpred}
A DRB is \emph{unpredictable} if for every PPT $\mathcal{A}$,
\begin{equation*}
  \Adv^{\mathrm{unpred}}_{\mathrm{DRB},\mathcal{A}}
  ~\stackrel{\mathrm{def}}{=}~
  \Pr\!\left[
    \begin{aligned}
      &(j,\Omega'_j)\leftarrow \mathcal{A}(\Omega_1,\ldots,\Omega_{\tau-1}, st_{\tau-1});\\
      &j \ge \tau \ \wedge\  \Omega'_j = \Omega_j
    \end{aligned}
  \right]
  \le \operatorname{negl}(\lambda).
\end{equation*}
\end{definition}

Intuitively, bias-resistance requires that no efficient adversary can predict any designated bit of the next output with probability non-negligibly better than $1/2$. 
Let $\Omega_{\tau}\in\{0,1\}^b$ denote the beacon output in round $\tau$ with bit-length $b$, 
and let $[b]\triangleq\{1,\ldots,b\}$.

\begin{definition}[Bias-Resistance (Bit-wise Unpredictability)]\label{def:bias}
A DRB is \emph{bias-resistant} if for every PPT family $\{\mathcal{A}_k\}_{k\in[b]}$,
for every round $\tau\ge 1$ and every position $k\in[b]$,
\begin{align*}
\Adv^{\mathrm{bias}}_{\mathrm{DRB},\mathcal{A}_k}
&= \Pr\!\Big[
  \Omega'_{\tau}[k] \leftarrow \mathcal{A}_k(\Omega_1,\ldots,\Omega_{\tau-1}, st_{\tau-1})
\\[-2pt] &\qquad\land\ \Omega'_{\tau}[k] = \Omega_{\tau}[k]
\Big]
\\
&\le \tfrac{1}{2} + \operatorname{negl}(\lambda).
\end{align*}

\end{definition}

Definition~\ref{def:bias} articulates bit-wise next-bit unpredictability conditioned on the public transcript, thereby ruling out any PPT strategy that skews the distribution of individual output bits by more than a negligible margin. In Definition~\ref{def:unpred}, allowing the adversary to target any future round $j\!\ge\!\tau$ merely strengthens the quantification without changing its essence; specializing to the immediate next round ($j\!=\!\tau$) is equivalent for our analysis. The proofs that follow instantiate these notions under standard hash-function assumptions in the random-oracle model.

\section{Protocol Description} \label{sec:protocol}

In this section, we provide a comprehensive overview of the proposed Commit-Reveal\(^2\) protocol. We begin by describing the baseline \emph{on-chain} implementation, which executes all steps directly on the blockchain. Next, we present an \emph{off-chain leveraged} hybrid protocol designed to minimize on-chain costs by optimally shifting storage operations off-chain, requiring only two transactions.

\newcommand{\cmdGen}{\operatorname{Gen}}
\newcommand{\cmdhash}{\operatorname{\mathcal{H}}}
\newcommand{\cmdsubmit}{\operatorname{submit}}
\newcommand{\cmdsign}{\operatorname{sign}}
\newcommand{\cmdMerkleRoot}{\operatorname{MerkleRoot}} %
\newcommand{\cmdAcceptif}{\text{Accept if }} %
\newcommand{\cmdDetermineRevealorder}{\text{Compute reveal order metric}}

\definecolor{lightgray}{gray}{0.9} %

\newlength{\colwidth}
\setlength{\colwidth}{0.3\textwidth} %
\addtolength{\colwidth}{-2\tabcolsep} %

\begin{figure*}[htbp]
\centering
\begin{tabular}{@{} >{\centering\bfseries}p{\colwidth} @{\hspace{2\tabcolsep}} >{\centering}p{2em} @{} >{\centering\bfseries}p{\colwidth} @{\hspace{2\tabcolsep}} >{\centering}p{2em} @{} >{\centering\bfseries}p{\colwidth} @{}}
Commit & $\longrightarrow$ & Reveal-1 & $\longrightarrow$ & Reveal-2 \\
\end{tabular}

\vspace{-2ex} %

\setlength{\arrayrulewidth}{0.6pt} %
\setlength\dashlinedash{1pt}      %
\setlength\dashlinegap{2pt}       %
\begin{tabular}{ p{\colwidth} : p{\colwidth} : p{\colwidth} } %
\hline
$\begin{aligned}[t] %
  \\[1ex] %
  & s_i \leftarrow \cmdGen() \\
  & c_{o,i} \leftarrow \cmdhash(s_i) \\
  & c_{v,i} \leftarrow \cmdhash(c_{o,i}) \\[0ex] %
  & \cmdsubmit(c_{v,i}) \\[5ex] %
  & \textit{On-chain}
\end{aligned}$
&
$\begin{aligned}[t]
  \\[1ex] %
  & \cmdsubmit(c_{o,i}) \\[2ex]
  & \cmdAcceptif c_{v,i} = \cmdhash(c_{o,i}) \\[2ex]
  & \cmdDetermineRevealorder \\
  & \Omega_v = \cmdhash(c_{o,1} \| \dots \| c_{o,n}) \\
  & d_i = \cmdhash(\Omega_v \| c_{v,i}) %
\end{aligned}$
&
$\begin{aligned}[t]
  \\[1ex] %
  & \cmdsubmit(s_i) \\[2ex]
  & \cmdAcceptif c_{o,i} = \cmdhash(s_i) \\
  & \quad \text{and } d_{i-1} > d_i \\[3ex] %
  & \Omega_o = \cmdhash(s_1 \| \dots \| s_n) \\[3ex] %
\end{aligned}$
\\ \hline
\end{tabular}

\vspace{0.5ex} %
\begin{tabular}{@{}p{\dimexpr\colwidth+\tabcolsep}@{} l @{\hspace{\dimexpr\colwidth+2\tabcolsep}} l @{\hspace{\dimexpr\colwidth+2\tabcolsep}} l @{}}
 & T1 & T2 & T3
\end{tabular}

\caption{On-chain Protocol Description. $\cmdhash$ indicates a hash function.}
\label{fig:onchain-protocol}
\end{figure*}

\begin{figure*}[tbp]
\centering
\begin{tabular}{@{} >{\centering\bfseries}p{\colwidth} @{\hspace{2\tabcolsep}} >{\centering}p{2em} @{} >{\centering\bfseries}p{\colwidth} @{\hspace{2\tabcolsep}} >{\centering}p{2em} @{} >{\centering\bfseries}p{\colwidth} @{}}
Commit & $\longrightarrow$ & Reveal-1 & $\longrightarrow$ & Reveal-2 \\
\end{tabular}

\vspace{-2ex} %

\setlength{\arrayrulewidth}{0.6pt} %
\setlength\dashlinedash{1pt}      %
\setlength\dashlinegap{2pt}       %
\begin{tabular}{@{} p{\colwidth} : p{\colwidth} : p{\colwidth} @{}} %
\hline
\cellcolor{lightgray}%
$\begin{aligned}[t]
  \\[1ex] %
  & s_i \leftarrow \cmdGen(), c_{o,i} \leftarrow \cmdhash(s_i) \\
  & c_{v,i} \leftarrow \cmdhash(c_{o,i}) \\
  & \cmdsubmit(c_{v,i}, \cmdsign_i) \\
  & \text{and accept if verified} \\
  & M_v = \mathcal{M}(c_{v,1}, \dots, c_{v,n}) \\[2ex] %
  & \textit{Off-chain}
\end{aligned}$
& %
\cellcolor{lightgray}%
$\begin{aligned}[t]
  \\[1ex] %
  & \cmdsubmit(c_{o,i}) \\[1ex]
  & \cmdAcceptif c_{v,i} = \cmdhash(c_{o,i}) \\[1ex]
  & \cmdDetermineRevealorder \\
  & \Omega_v = \cmdhash(c_{o,1} \| \dots \| c_{o,n}) \\
  & d_i = \cmdhash(\Omega_v \| c_{v,i}) 
\end{aligned}$
& %
\cellcolor{lightgray}%
$\begin{aligned}[t]
  \\[1ex] %
  & \cmdsubmit(s_i) \\[2ex]
  & \cmdAcceptif c_{o,i} = \cmdhash(s_i) \\
  & \quad \text{and } d_{i-1} > d_i 
\end{aligned}$
\\ \hline
$\begin{aligned}[t]
   & \textit{}
   \\[1ex] %
   & \cmdsubmit(M_v) \\[7ex] %
   & \textit{On-chain}
\end{aligned}$
& %
$\begin{aligned}[t]
   \\[8ex] %
   \ %
\end{aligned}$
& %
$\begin{aligned}[t]
  & \\[1ex]
  & \cmdsubmit(s_1, \cmdsign_1(c_{v,1}), \dots) \\
  & \cmdAcceptif \text{verified and } \\ 
  & M_v = \mathcal{M}(\cmdhash(\cmdhash(s_1)),  \dots) \\
  & \Omega_o = \cmdhash(s_1 \| \dots \| s_n)  \\ 
\end{aligned}$
\\ \hline
\end{tabular}

\vspace{0.5ex} %
\begin{tabular}{@{}p{\dimexpr\colwidth+\tabcolsep}@{} l @{\hspace{\dimexpr\colwidth+2\tabcolsep}} l @{\hspace{\dimexpr\colwidth+2\tabcolsep}} l @{}}
 & T1 & T2 & T3
\end{tabular}

\caption{Off-chain Leveraged Protocol Description. $\cmdhash$ indicates a hash function and $\mathcal{M}$ indicates a Merkle root derivation function.}
\label{fig:offchain-protocol}
\end{figure*}

\subsection{On-Chain Commit-Reveal\texorpdfstring{$^2$}{}}
\label{subsec:on-chain-flow}

Before participating, a node must meet a deposit requirement enforced by the on-chain contract, ensuring that economic incentives are in place to deter malicious behavior. Once a participant has deposited the necessary funds, the Commit-Reveal\(^2\) sequence proceeds entirely on-chain:

\begin{enumerate}
    \item \textbf{Commit Phase:}
    \begin{itemize}
        \item Each participant \(i\) locally generates a secret \(s_i\) from a high-entropy source.
        \item Two layers of commitments are computed:
        \[
            c_{o,i} = \mathrm{Hash}(s_i), 
            \quad
            c_{v,i} = \mathrm{Hash}(c_{o,i}).
        \]
        \item Each participant calls \(\texttt{submit}(c_{v,i})\) on-chain, thereby storing the outer commitment for later verification.
    \end{itemize}

    \item \textbf{Reveal-1 Phase:}
    \begin{itemize}
        \item Participants disclose their first-layer commitment \(c_{o,i}\) by calling \(\texttt{submit}(c_{o,i})\).
        \item The smart contract checks \(\mathrm{Hash}(c_{o,i}) = c_{v,i}\). A successful match means the commitment is revealed correctly.
        \item Once all \(\{c_{o,i}\}\) are revealed, the contract computes 
        \[
            \Omega_v = \mathrm{Hash}(c_{o,1} \,\|\, c_{o,2} \,\|\, \dots \,\|\, c_{o,n}).
        \]
        \item This intermediate value \(\Omega_v\) is used to calculate values of $d_i$ for all participating nodes 
        \[
            d_i = \mathrm{Hash}(\Omega_v \| c_{v,i}).
        \] 
        \item Based on the sequence $d_i$, the \emph{order} in which participants will reveal their final secrets is decided. Nodes are sorted in descending order of $d_i$, meaning a node with a larger value of $d_i$ must reveal its final secret before nodes with smaller values of $d_i$.
    \end{itemize}

    \item \textbf{Reveal-2 Phase:}
    \begin{itemize}
         \item  The full reveal order array is submitted; the contract verifies that \(d_{i-1} > d_i\) holds for each adjacent pair, then saves it.
        \item Following the reveal order, each participant discloses \(s_i\).
        \item The contract confirms \(\mathrm{Hash}(s_i) = c_{o,i}\).
        \item The final randomness is then produced by hashing all revealed secrets:
        \[
            \Omega_o = \mathrm{Hash}(s_1 \,\|\, s_2 \,\|\, \dots \,\|\, s_n).
        \]
    \end{itemize}
\end{enumerate}

By chaining two commit-reveal sequences, participants cannot trivially position themselves last to manipulate the outcome, thus reducing the risk of last-revealer attacks. The final random value \(\Omega_o\) is stored on-chain, ready for consumption by DApps or external services.

\subsection{Off-Chain Leveraged (Hybrid) Commit-Reveal\texorpdfstring{$^2$}{}}
\label{subsec:off-chain-flow}

The hybrid model shifts most storage-related tasks off-chain, reducing on-chain gas costs while preserving key security guaranties via cryptographic proofs.

We assume that all communication between the leader node and participants occurs over a secure, authenticated channel—a standard assumption commonly employed in blockchain protocols that integrate on-chain and off-chain communication, which is achievable with existing standard internet protocols.

\subsubsection{Preparation (Registration and Activation)}

While a deposit is required to participate in both on-chain and hybrid protocols, the hybrid approach employs an off-chain registration and activation step with a leader node. This deposit and registration need only be done once per participant, since the same deposit can be reused in subsequent randomness rounds. Specifically, each node checks the on-chain threshold and provides the necessary deposit. It then submits its Ethereum address, signed authentication, and network details (IP address, port) to the leader node. After verifying the deposit and signature, the leader node updates its local records to formally recognize the node as active.

\subsubsection{Off-Chain Commit and Reveal-1 Phases}

\begin{itemize}
    \item \textbf{Commit Off-Chain:}
    \begin{itemize}
        \item Each node \(i\) locally generates \(s_i\) and calculates \((c_{o,i}, c_{v,i})\).
      \item Rather than individually posting $\{c_{v,i}\}$ on-chain, nodes digitally sign $(\texttt{chainId}, \texttt{verContract}, \texttt{round}, \texttt{attemptId}, c_{v,i})$ under a domain-bound scheme (for example, EIP-712 typed data in Ethereum). Here, \texttt{round} denotes the unique identifier of the randomness round, and \texttt{attemptId} is a per-round counter that increments if the round needs to be retried (e.g., due to a dispute or timeout). The fields \texttt{chainId} and \texttt{verContract} bind signatures to the intended blockchain and contract instance, preventing cross-chain or cross-contract replays. The tuple $(c_{v,i}, \textit{sign}_i)$ is then transferred to the leader node.
        \item The leader node builds a Merkle tree (or an equivalent structure) over \(\{c_{v,i}\}\) and submits the resulting \(\mathrm{MerkleRoot}\) on-chain, thereby storing only minimal data.
    \end{itemize}

    \item \textbf{Reveal-1 Off-Chain:}
    \begin{itemize}
        \item Nodes reveal \(c_{o,i}\) off-chain to the leader node, which verifies \(\mathrm{Hash}(c_{o,i}) = c_{v,i}\).
        \item After confirming all \(\{c_{o,i}\}\), the nodes compute
        \[
            \Omega_v = \mathrm{Hash}(c_{o,1} \,\|\, \dots \,\|\, c_{o,n}),
        \]
         \[
            d_i = \mathrm{Hash}(\Omega_v \| c_{v,i}).
        \] 
    \end{itemize}
\end{itemize}

\subsubsection{Off-Chain Reveal-2 Phase}

\begin{itemize}
    \item \textbf{Secret Reveal:}
    Participants disclose \(\{s_i\}\) off-chain in the order derived from $d_i$. The leader node verifies \(\mathrm{Hash}(s_i) = c_{o,i}\).

\item \textbf{Final Submission On-Chain:}
After collecting all valid secrets and associated signatures, the leader node sends 
$(s_i, c_{o,i}, c_{v,i}, \textit{sign}_i)$ for each participant to the smart contract. 
The contract verifies each signature against the active tuple 
$(\texttt{chainId}, \texttt{verContract}, \texttt{round}, \texttt{attemptId})$, 
ensuring domain separation across chains, contracts, and rounds. 
It then validates each reveal and constructs
\[
    \Omega_o = \mathrm{Hash}(s_1 \,\|\, s_2 \,\|\, \dots \,\|\, s_n),
\]
storing the resulting random value on-chain once verification is complete.
\end{itemize}

By offloading most commit and reveal processes to off-chain channels, the hybrid approach significantly reduces on-chain storage overhead while retaining cryptographic security guaranties. Essential data and final verification are still anchored on-chain, ensuring trust and auditability. This balance between off-chain coordination and on-chain confirmation offers a cost-effective randomness solution. However, its robustness is ultimately guaranteed by a comprehensive set of accountability mechanisms, which we detail in the following section.

\section{Accountability and Fallback Mechanisms}
\label{sec:fallback}

While cryptographic primitives provide the foundation for security, the robustness of a decentralized protocol in a real-world setting is determined by its ability to handle non-cooperative or faulty participants. The Commit-Reveal² protocol is designed with a core philosophy of accountability: while it may not prevent all liveness failures, it ensures that any party responsible for such a failure can be irrefutably identified and penalized. This is achieved through a set of clearly defined fallback mechanisms that transition from an efficient off-chain process to a more deliberate, fully verifiable on-chain protocol upon detection of a fault.

The complete flow of these accountability procedures is illustrated in Fig. \ref{fig:fallback}. The diagram depicts the protocol's primary operational path and the distinct exception-handling branches triggered by either participant or leader failures. This section details the specific failure scenarios and corresponding resolution mechanisms shown in the flowchart, explaining how each step enforces accountability and ensures protocol integrity.

\subsection{Participant Data Withholding}

The first failure scenario, depicted in the upper branch of Fig.~\ref{fig:fallback}, addresses a non-compliant participant who withholds data during the off-chain coordination phase. To prevent such behavior from stalling the protocol, the leader node is empowered to initiate an on-chain dispute resolution process. This escalation is triggered if a participant fails to broadcast their required values (i.e., $C_v$, $C_o$, or $S$) within the designated submission windows. Depending on the phase of the failure, the leader can invoke a specific smart contract function, such as \texttt{requestToSubmitCv()} or \texttt{requestToSubmits()}, to formally compel the malicious operator to submit their data on-chain.

Once this on-chain fallback mechanism is activated, the malicious participant must submit their data within the \texttt{onChainSubmissionPeriod}. The submission is then subjected to rigorous, multi-stage verification to confirm its validity:
\begin{itemize}
    \item \textbf{Commitment Validity:} The hash of the first-layer commitment must match the second-layer commitment ($C_v = \text{hash}(C_o)$).
    \item \textbf{Reveal Authenticity:} The hash of the revealed secret must match the first-layer commitment ($C_o = \text{hash}(S)$).
    \item \textbf{Merkle Proof Verification:} The commitment must be proven to be part of the original set committed in the Merkle root.
    \item \textbf{Signature Requirement:} To prevent false accusations by the leader, any value not already anchored on-chain must be accompanied by a valid EIP-712 signature from the respective operator.
\end{itemize}
If the participant successfully submits the required data and passes all verifications, the protocol continues. However, failure to comply within the specified timeframe results in severe and irreversible penalties. The non-compliant operator’s staked deposit is immediately slashed, and the funds are redistributed among all compliant operators and the leader node (as compensation for gas costs). Concurrently, the operator is deactivated from the participant set. This ensures that any deviation is either corrected on-chain or results in the offender's swift and costly removal, allowing the protocol to maintain forward progress with the reduced operator set. If the number of active participants falls below the minimum threshold (e.g., two), the protocol enters a halt state until more operators join.

\begin{figure*}[htb]
    \centering
    \includegraphics[width=\linewidth]{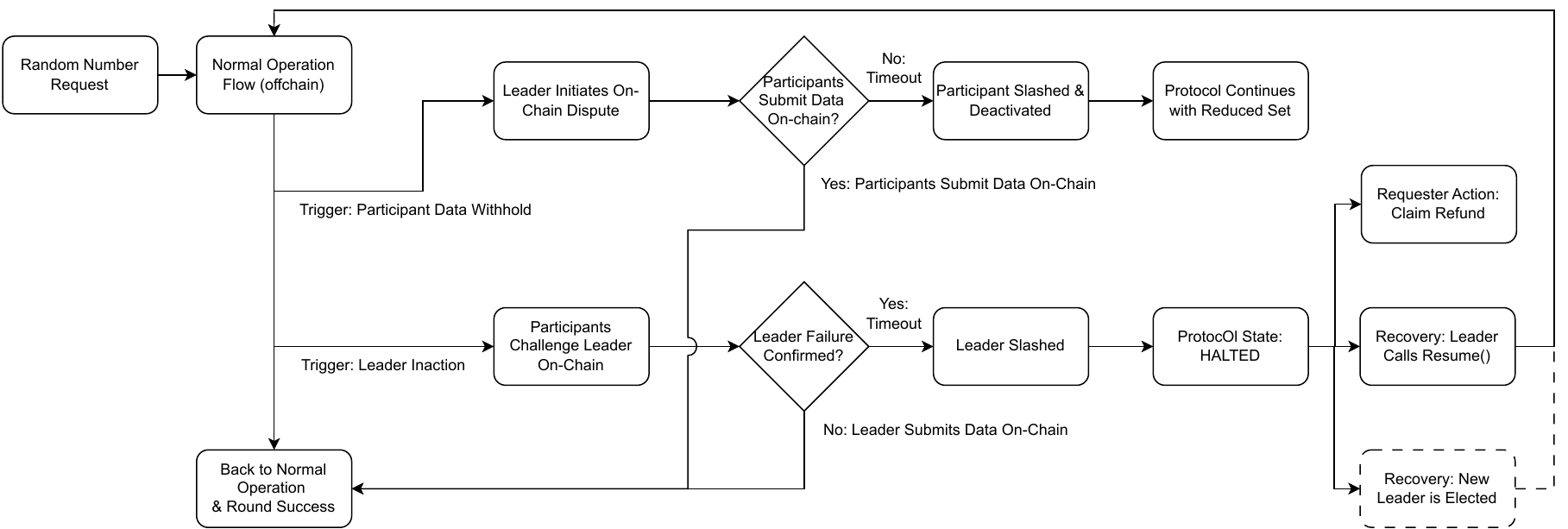}
    \caption{Fallback Mechanism Overview. The diagram illustrates the protocol's response to participant or leader failures. The system operation is triggered by a random number request. It outlines the transition from the normal off-chain operation to on-chain dispute resolution, culminating in either participant slashing, a protocol halt, or successful recovery.}
    \label{fig:fallback}
\end{figure*}

\subsection{Malicious or Failed Leader}
\label{subsec:leader_failure}

The second critical failure scenario, shown in the lower branch of Fig.~\ref{fig:fallback}, involves a malicious or failed leader. The leader's privileged position in aggregating off-chain data and executing on-chain actions makes its reliability paramount. The protocol holds the leader accountable for several time-critical responsibilities:
\begin{enumerate}
    \item \textbf{Merkle Root Submission:} Timely submission of the aggregated $C_v$ commitments' Merkle root.
    \item \textbf{Dispute Initiation:} Initiating on-chain disputes for non-compliant participants.
    \item \textbf{Random Number Generation:} Executing the final random number generation and distribution.
\end{enumerate}

Unlike participant failures detected by the leader, leader failures are detected by any participant through on-chain timeout mechanisms. For instance, if the leader fails to submit the Merkle root within the designated window, any participant can invoke a function to trigger the failure condition. A particularly insidious attack involves the leader submitting an incorrect Merkle root. While the protocol cannot prevent this submission upfront, it ensures eventual detection and punishment, as the incorrect root will cause verification failures during the reveal phase, ultimately leading to a timeout and slashing.

Upon a confirmed leader failure, the protocol imposes immediate and severe penalties. The leader's staked deposit is slashed and redistributed equally among all compliant operators to compensate them for the disruption. Subsequently, the protocol enters a \texttt{HALTED} state to prevent further operations until service can be safely restored. The current recovery model is contingent upon the original leader replenishing their deposit and calling the \texttt{resume()} function. While this design ensures the responsible party bears the cost of recovery, it is tailored for a context with a single designated leader. A more robust approach for a dynamic environment would involve permanently excluding the failed leader and electing a new one to enhance long-term system resilience. The potential mechanisms and trade-offs for such an election process are further explored in our discussion in Section \ref{sec:discussion}.

\subsection{Consumer Protection in a Halted State}

While enforcing operator accountability is crucial, the protocol equally prioritizes protecting consumers who have paid for randomness generation services. The \texttt{HALTED} state, a necessary safeguard against a failed leader, directly impacts these consumers. To address this, the protocol provides a flexible refund mechanism that empowers consumers during service disruptions.

When the protocol is in the \texttt{HALTED} state, a consumer with a pending request faces a choice:
\begin{enumerate}
    \item \textbf{Wait for Resumption:} Trust that the leader will recover and call \texttt{resume()}, allowing their random number request to be fulfilled as originally intended.
    \item \textbf{Claim Refund:} Immediately reclaim their payment by calling the \texttt{refund()} function if they cannot wait or no longer trust the protocol's ability to deliver.
\end{enumerate}
To ensure secure and fair processing, the \texttt{refund()} function is protected by several checks. It verifies that the request belongs to the caller, that the round has not already been processed, and that a refund has not already been issued for that round. Once a refund is claimed, the request is permanently canceled and will be skipped during protocol resumption, preventing any possibility of double-spending. This balanced ecosystem, which separates operator penalties (slashed funds distributed to participants) from consumer restitution (full refunds to consumers), ensures that consumers retain full control over their funds and are never forced to bear the financial consequences of operator failures.

\section{Security Analysis} \label{sec:security-analysis}

In this section, we prove the security of the proposed $Commit-Reveal^2$ ($CR^2$) based on the definitions in Section \ref{sec:threat model}. Furthermore, we show that $CR^2$ is resistant to replay attacks.

\begin{theorem} \label{thm:unpredictability}
$CR^2$ is unpredictable if the applied hash function $keccak$ behaves as a random oracle.
\end{theorem}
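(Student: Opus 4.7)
The plan is to reduce unpredictability to a standard random-oracle extraction argument: the only way a PPT adversary can predict the output of a fresh random-oracle evaluation is to query the oracle on the exact preimage, and by the honest-operator assumption this preimage contains a uniformly random secret that the adversary has not observed. Throughout I would work in the random-oracle model, treating $H=\mathrm{keccak}$ as a lazily sampled oracle, and fix a PPT adversary $\mathcal{A}$ making $q(\lambda)=\mathrm{poly}(\lambda)$ queries to $H$.

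First I would fix a target round $j\ge\tau$ and focus on the final output $\Omega_j = H(s_1\|\cdots\|s_n)$. By the security model, at least one operator $h$ is honest in round $j$, so $s_h$ is sampled uniformly from $\{0,1\}^{\lambda}$ and is not released before Reveal-2 of round $j$; the transcript available to $\mathcal{A}$ at prediction time contains at most the on-chain commitment $c_{v,h}=H(H(s_h))$ and, within the reveal-1 subphase, the intermediate value $c_{o,h}=H(s_h)$, but never $s_h$ itself.

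Next I would run the standard splitting argument. Let $E$ denote the event that $\mathcal{A}$ ever queries $H$ on the input $x^{\star}=s_1\|\cdots\|s_n$. Conditioned on $\neg E$, the value $\Omega_j$ is independent of $\mathcal{A}$'s view and uniform in $\{0,1\}^{b}$, so the prediction probability is bounded by $2^{-b}$. To bound $\Pr[E]$, observe that $x^{\star}$ contains $s_h$ as a substring, so any query triggering $E$ must in particular encode a correct guess of $s_h$; in the lazy-sampling model, $c_{o,h}$ and $c_{v,h}$ are random strings independent of $s_h$ unless $\mathcal{A}$ queries $H$ on $s_h$ or on a second preimage of $c_{o,h}$, and a union bound over the $q$ queries gives $\Pr[E]\le q/2^{\lambda}$. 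Combining the two cases yields $\Adv^{\mathrm{unpred}}_{CR^{2},\mathcal{A}} \le q/2^{\lambda} + 2^{-b} = \mathrm{negl}(\lambda)$.

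The main obstacle is accounting for the adversary's adaptive control over the corrupted operators, who may choose their own secrets and commitments as a function of the observed transcript. The key observation that keeps the argument clean is that the commit-then-reveal structure, together with preimage resistance of $H$ in the random-oracle model, forces every corrupted $s_i$ to be pinned down by its on-chain $c_{v,i}$ no later than the Commit phase, that is, before $s_h$ is ever disclosed, so the freshness of $s_h$ survives all adaptive choices the adversary may subsequently make. Once this is established, the lazy-sampling reduction above applies without modification and delivers the claimed negligible bound, regardless of whether $j=\tau$ or $j>\tau$ (the latter being strictly easier, as additional honest secrets are then involved).
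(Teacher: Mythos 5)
Your argument is correct, but it takes a genuinely different route from the paper. The paper proves this theorem in two modular steps: a reduction showing that any predictor for Commit--Reveal$^2$ yields a predictor for the basic Commit--Reveal protocol (the simulator embeds the honest party's commitment $h=H(s)$ from the basic game as $c_{o,h}$ and forwards the adversary's prediction, exploiting that both protocols define the final output as $H(s_1\|\dots\|s_n)$), and then an appendix game-hopping proof in the random-oracle model that the basic protocol is unpredictable (replace the honest commitment, then the final output, by programmed random strings, ending with advantage $2^{-m}$). You instead give a direct, monolithic random-oracle argument for Commit--Reveal$^2$ itself: condition on the bad event that the adversary queries the exact preimage $s_1\|\cdots\|s_n$, bound that event by $q/2^{\lambda}$ using the hidden uniform honest secret $s_h$ (observing that $c_{o,h}$ and $c_{v,h}$ are, under lazy sampling, independent random strings until $s_h$ is queried), and bound the complementary case by $2^{-b}$. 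What the paper's route buys is modularity -- the layered construction inherits security from the base scheme with the ROM work confined to the simpler protocol -- though its reduction is stated informally and gives no concrete bounds; what your route buys is an explicit concrete bound $q/2^{\lambda}+2^{-b}$, an explicit treatment of adaptivity (corrupted secrets are pinned by their on-chain commitments before $s_h$ is disclosed, which the paper's reduction leaves implicit) and of arbitrary target rounds $j\ge\tau$, at the cost of redoing the oracle bookkeeping for the two-layer commitments directly. Both arguments are sound; yours would benefit from one sentence noting that the other oracle evaluations made by the protocol ($\Omega_v$, the $d_i$, the layered commitments) are on inputs distinct from $s_1\|\cdots\|s_n$ except with negligible probability, so they do not prematurely fix $\Omega_j$.
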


Since our design builds on the standard Commit-Reveal paradigm, the unpredictability of $CR^2$ is established by reduction to the security of Commit-Reveal. The following theorem formalizes this result, with its proof provided in Appendix A.%

	\begin{theorem}
		If the basic Commit-Reveal protocol is unpredictable, then the Commit-Reveal$^2$ protocol is also unpredictable.
	\end{theorem}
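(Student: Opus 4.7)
The approach is a straightforward black-box reduction: assume a PPT adversary $\mathcal{A}$ breaks the unpredictability of Commit-Reveal$^{2}$ with non-negligible advantage $\epsilon(\lambda)$, and build a PPT adversary $\mathcal{B}$ that breaks the unpredictability of the underlying basic Commit-Reveal with the same advantage, contradicting the hypothesis. The guiding observation is that the final output in both protocols is syntactically the same object, $\Omega_{o}=\mathcal{H}(s_{1}\|\cdots\|s_{n})$; the extra machinery of CR$^{2}$ (the outer commitments $c_{v,i}=\mathcal{H}(c_{o,i})$ and the per-party ordering values $d_{i}=\mathcal{H}(\Omega_{v}\|c_{v,i})$) is a deterministic, publicly computable function of the basic-CR transcript. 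Hence any information available to a CR$^{2}$ attacker is already derivable, up to efficient postprocessing, from the basic-CR transcript.

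The concrete steps of the reduction are as follows. First, I would fix notation and recall from Section~\ref{sec:protocol} that a CR$^{2}$ round consists of (i) a commit phase producing $\{c_{v,i}\}$, (ii) a reveal-1 phase producing $\{c_{o,i}\}$ together with the derived ordering values $\{d_{i}\}$, and (iii) a reveal-2 phase producing $\{s_{i}\}$ in the induced order, culminating in $\Omega_{o}$. Second, $\mathcal{B}$ simulates a CR$^{2}$ instance for $\mathcal{A}$ by playing the basic-CR game against its own challenger: whenever the basic-CR game exposes an honest commitment $c_{i}=\mathcal{H}(s_{i})$, $\mathcal{B}$ sets $c_{o,i}:=c_{i}$ and locally computes $c_{v,i}:=\mathcal{H}(c_{o,i})$; when the basic-CR game releases $s_{i}$, $\mathcal{B}$ forwards it to $\mathcal{A}$ in the order prescribed by $d_{i}$, which it computes itself. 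Adversarial operators controlled by $\mathcal{A}$ in CR$^{2}$ are translated into adversarial operators in the basic-CR game by forwarding $\mathcal{A}$'s chosen $s_{i}$ (or inducing the corresponding abort/withhold in the basic-CR execution). Third, I would argue that the simulated view is identically distributed to a real CR$^{2}$ execution, because every additional CR$^{2}$ quantity is defined by public deterministic functions of objects already present in the basic-CR transcript. Finally, the output map is the identity: if $\mathcal{A}$ returns a prediction $\Omega'_{j}=\Omega_{j}$ for some future round $j$, $\mathcal{B}$ returns the same value as its own prediction and wins the basic-CR game with probability $\epsilon(\lambda)$.

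The step I expect to require the most care is the simulation of adversarial behavior during reveal-1 and reveal-2, together with the ordering step. Unlike basic CR, CR$^{2}$ gives $\mathcal{A}$ the freedom to condition its reveal-2 participation on the induced order $d_{i}$, so I need to show explicitly that this conditioning does not change the distribution of $\Omega_{o}$ in a way that is exploitable beyond what $\mathcal{A}$ could already do in basic CR with the same $\{s_{i}\}$. The cleanest way is to note that $d_{i}$ is a function of $(c_{v,i},\{c_{o,j}\}_{j})$ only, all of which are available (or derivable) in the basic-CR transcript before any $s_{i}$ is released; hence any strategy of $\mathcal{A}$ that reacts to $d_{i}$ can be mirrored by a basic-CR adversary that computes $d_{i}$ itself and reacts identically. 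Once this equivalence of views is established, the probability that $\mathcal{B}$'s guess equals $\Omega_{j}$ is exactly $\Pr[\mathcal{A}\text{ breaks CR}^{2}]$, yielding the contrapositive and completing the proof.
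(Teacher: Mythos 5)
Your proposal is correct and follows essentially the same route as the paper's proof: a black-box reduction in which the simulator embeds the honest party's basic-CR commitment as the inner commitment $c_{o}$, computes the outer layer $c_{v}$ and the ordering values $d_i$ itself (they are public deterministic functions of the transcript), and forwards the CR$^2$ adversary's prediction unchanged because $\Omega_o$ is defined identically in both games. Your explicit treatment of the adversary conditioning on the induced reveal order is a slightly more careful elaboration of what the paper handles implicitly, but it is the same argument.
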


	\begin{proof}
		We assume for contradiction that there exists a PPT adversary $\mathcal{A}_{\text{CR}^2}$ who can break the unpredictability of the CR$^2$ protocol with a non-negligible advantage $\epsilon$. This means $\mathcal{A}_{\text{CR}^2}$ can predict the final randomness $\Omega_o = \mathsf{Hash}(s_1 \| \dots \| s_n)$ with probability $\text{Pr}[\mathcal{A}_{\text{CR}^2} \text{ wins}] = \epsilon$, where $\epsilon$ is non-negligible.
		
		We will construct a simulator $\mathcal{S}$ that uses $\mathcal{A}_{\text{CR}^2}$ to break the unpredictability of the basic CR protocol, leading to a contradiction. The simulator $\mathcal{S}$ plays a dual role: it acts as the adversary against a challenger for the basic CR protocol ($\mathcal{C}_{\text{CR}}$) and as the challenger for the CR$^2$ adversary ($\mathcal{A}_{\text{CR}^2}$).
		
		\subsubsection*{Simulator Construction}
		
		\begin{enumerate}
			\item \textbf{Commit Phase of the CR$^2$ Game:} The simulator $\mathcal{S}$ begins its interaction with the CR$^2$ adversary, $\mathcal{A}_{\text{CR}^2}$.
			\begin{itemize}
				\item $\mathcal{A}_{\text{CR}^2}$ chooses $n-1$ secrets, $r_i$, and provides $\mathcal{S}$ with the second-layer commitments: $cv_i = \mathsf{Hash}(\mathsf{Hash}(r_i))$ for $i=1, \dots, n-1$4.
				\item Simultaneously, the simulator $\mathcal{S}$ is running a basic CR game with its challenger, ($\mathcal{C}_{\text{CR}}$). $\mathcal{C}_{\text{CR}}$ provides $\mathcal{S}$ with a commitment from an honest party, $h = \mathsf{Hash}(s)$, where $s$ is the honest secret.
				\item The simulator $\mathcal{S}$ uses this commitment to complete the CR$^2$ commit phase, providing its own commitment to $\mathcal{A}_{\text{CR}^2}$ for the honest party: $cv_h = \mathsf{Hash}(h) = \mathsf{Hash}(\mathsf{Hash}(s))$.
			\end{itemize}
			
			\item \textbf{Reveal-1 Phase of the CR$^2$ Game:} This phase is simulated to allow $\mathcal{A}_{\text{CR}^2}$ to proceed.
			\begin{itemize}
				\item $\mathcal{A}_{\text{CR}^2}$ reveals the first-layer commitments to the simulator: $co_i = \mathsf{Hash}(r_i)$ for $i=1, \dots, n-1$.
				\item The simulator $\mathcal{S}$ reveals its first-layer commitment for the honest party: $co_h = h$.
				\item The simulator then computes the reveal order for the CR$^2$ game, which is a deterministic function of these first-layer commitments.
			\end{itemize}
			
			\item \textbf{Adversary's Prediction and Final Reveal:}
			\begin{itemize}
				\item The simulator's goal is to predict the final output of the basic CR game: $R = \mathsf{Hash}(r_1 \| \dots \| r_{n-1} \| s)$.
				\item $\mathcal{A}_{\text{CR}^2}$ is trying to predict the final output of the CR$^2$ game: $\Omega_o = \mathsf{Hash}(r_1 \| \dots \| r_{n-1} \| s)$.
				\item At the appropriate time, before the final secrets are revealed, $\mathcal{A}_{\text{CR}^2}$ outputs its prediction, $R'$, for $\Omega_o$.
				\item Since the final outputs of both protocols are defined identically, $\mathcal{S}$ can simply forward $\mathcal{A}_{\text{CR}^2}$'s prediction $R'$ to its challenger, $\mathcal{C}_{\text{CR}}$, as its prediction for the basic CR game's output $R$.
			\end{itemize}
		\end{enumerate}
		
		\subsubsection*{Analysis}
		The simulator is successful in breaking the unpredictability of the basic CR protocol if its prediction $R'$ matches the actual output $R$. The information available to the simulator allows it to perfectly simulate the CR$^2$ game for $\mathcal{A}_{\text{CR}^2}$. The final outputs of both games are identical.
		
		The probability that $\mathcal{A}_{\text{CR}^2}$ correctly predicts the CR$^2$ output is $\epsilon$, which is non-negligible by assumption. The probability that $\mathcal{S}$ correctly predicts the basic CR output is therefore also $\epsilon$.
		
		Thus, the simulator $\mathcal{S}$ breaks the unpredictability of the basic Commit-Reveal protocol with a non-negligible advantage. This contradicts our initial premise that the basic CR protocol is unpredictable. Therefore, our initial assumption that an adversary $\mathcal{A}_{\text{CR}^2}$ exists must be false. The Commit-Reveal$^2$ protocol is unpredictable.
	\end{proof}
    	We now prove that the Commit-Reveal$^2$ protocol is bias-resistant if the cryptographic hash function $\mathsf{Hash}$ behaves as a random oracle.
	
	\begin{theorem}
		The Commit-Reveal$^2$ protocol is bias-resistant if the applied hash function acts as a random oracle.
	\end{theorem}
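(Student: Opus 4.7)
The plan is to reduce bias-resistance to the lazy-sampling property of the random oracle applied to the final aggregation $\Omega_o = \mathsf{Hash}(s_1\|\cdots\|s_n)$, using the fact that at least one honest operator contributes a fresh high-entropy secret $s_h$ that the adversary cannot recover from its commitment. I would proceed by contradiction: suppose some PPT $\mathcal{A}_k$ predicts bit $k$ of $\Omega_\tau$ with probability $\tfrac{1}{2} + \epsilon$ for non-negligible $\epsilon$, and derive a contradiction with preimage resistance of $\mathsf{Hash}$ in the random-oracle model.

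The first step is to characterize $\mathcal{A}_k$'s view at the moment of prediction. By the Commit and Reveal-1 phases of round $\tau$, all second-layer values $c_{v,i}$ and first-layer values $c_{o,i}$ are already published, and each corrupted secret $s_i$ is effectively bound to its revealed $c_{o,i} = \mathsf{Hash}(s_i)$. The honest secret $s_h$, however, appears in the transcript only through $c_{o,h} = \mathsf{Hash}(s_h)$. Since $\mathsf{Hash}$ is modeled as a random oracle, inverting it to recover $s_h$ succeeds with probability at most $q/2^{\lambda}$ when the adversary makes $q = \mathrm{poly}(\lambda)$ oracle queries, which is negligible in $\lambda$.

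The second step is to apply the standard random-oracle argument to the final aggregate $W = s_1\|\cdots\|s_n$. Conditioned on the event that $\mathcal{A}_k$ has never queried the oracle at $W$ (which holds except with negligible probability by the preceding step, since any query at $W$ entails knowledge of $s_h$), the value $\mathsf{Hash}(W)$ is sampled uniformly from $\{0,1\}^{b}$ independently of the entire adversarial view $(\Omega_1,\ldots,\Omega_{\tau-1}, st_{\tau-1})$. Hence $\Omega_\tau[k]$ is Bernoulli$(\tfrac{1}{2})$ conditioned on that view, so $\Pr[\mathcal{A}_k\text{ wins}] \le \tfrac{1}{2} + \mu(\lambda)$ for a negligible $\mu$, contradicting the assumed advantage $\epsilon$.

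The main obstacle will be ruling out adversarial grinding via the randomized reveal ordering, since the order is determined by $d_i = \mathsf{Hash}(\Omega_v\|c_{v,i})$ with $\Omega_v = \mathsf{Hash}(c_{o,1}\|\cdots\|c_{o,n})$, and a rushing coalition might attempt to bias which party reveals last and then abort strategically on the honest $s_h$. I would handle this by observing that, by the time Reveal-1 concludes, every $c_{o,i}$ is already fixed by the prior $c_{v,i}$ commitment and the entire string $W$ is determined, so $\mathsf{Hash}(W)$ cannot be retroactively skewed through the reveal order. Moreover, any abort in Reveal-2 is resolved by the on-chain fallback with slashing rather than by silently substituting an alternative $\Omega_o$, so the adversary's only remaining lever is the negligible chance of inverting $\mathsf{Hash}$ at $c_{o,h}$, which the preimage argument above rules out.
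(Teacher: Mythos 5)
Your proof is correct and reaches the paper's conclusion, but by a noticeably different technique. The paper uses a two-step game-hopping argument: in $\mathsf{Game}_1$ the honest parties' secrets are replaced by uniformly random strings with the random oracle programmed for consistency, and in $\mathsf{Game}_2$ the final output is replaced by a uniform $R^*$, again by programming; since each hop shifts the advantage only negligibly and the $\mathsf{Game}_2$ advantage is exactly $1/2$, bias-resistance follows. You instead give a direct reduction: you isolate the bad event that the adversary queries the oracle at $W = s_1\|\cdots\|s_n$, bound it by the probability of recovering the honest secret $s_h$ from $c_{o,h}=\mathsf{Hash}(s_h)$ (at most $q/2^{\lambda}$ for a $q$-query adversary in the ROM), and note that conditioned on no such query, lazy sampling makes $\Omega_\tau[k]$ a fresh unbiased bit independent of the adversary's view, so the advantage is at most $\tfrac{1}{2}$ plus a negligible term. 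Both arguments hinge on the same core fact---a single honest high-entropy secret makes the aggregate hash uniform from the adversary's perspective in the random-oracle model---but yours gives an explicit concrete bound and avoids oracle programming, while the paper's hybrid style is more modular and mirrors its unpredictability proof in Appendix A. You also explicitly dispose of the reveal-order grinding concern (observing that $W$ is already fixed after Reveal-1, so the order derived from $d_i$ cannot retroactively skew $\mathsf{Hash}(W)$), a point the paper's proof leaves implicit; be aware, though, that neither your argument nor the paper's formally models the residual bias-by-abort channel in which a withholding participant is slashed and the round is retried with fresh secrets---both treat that as outside the prediction game.
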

	
	\begin{proof}
		
		We assume for contradiction that a PPT adversary $\mathcal{A}$ can break the bias-resistance of the Commit-Reveal$^2$ protocol with a non-negligible advantage. This means $\mathcal{A}$ can predict a specific bit of the final randomness with a success probability of greater than $\frac{1}{2} + \mathrm{negl}(\lambda)$. We will define a sequence of games to show that no PPT adversary can achieve this.
		
		We assume there are $n$ total participants and the adversary $\mathcal{A}$ controls $t$ of them, while $n-t$ participants are honest.
		
		\begin{itemize}
			\item $\mathsf{Game}_0$ (The Real Protocol)
			\begin{itemize}
				\item \textbf{Protocol Execution}: The Commit-Reveal$^2$ protocol runs as described. The adversary $\mathcal{A}$ controls $t$ parties and observes the on-chain commitments and off-chain revelations of the honest parties.
				\item \textbf{Adversary's Task}: Before the final reveal phase, the adversary $\mathcal{A}$ chooses a bit position $k$ and outputs a prediction for the $k^{th}$ bit of the final randomness, $\Omega_o[k]$.
				\item \textbf{Win Condition}: The adversary wins if its prediction for $\Omega_o[k]$ is correct.
				\item \textbf{Advantage}: The adversary's advantage in this game is defined as $\text{Adv}_0 = \text{Pr}[\mathcal{A} \text{ correctly predicts } \Omega_o[k]]$. For bias-resistance, we assume $\text{Adv}_0 > 1/2 + \text{negl}(\lambda)$.
			\end{itemize}
			
			\item $\mathsf{Game}_1$ (Ideal Hashing of Honest Secrets)
			\begin{itemize}
				\item \textbf{Change}: This game is identical to $\mathsf{Game}_0$, except that the secrets $s_j$ of the $n-t$ honest parties are replaced with uniformly random, independent strings.
				\item \textbf{Protocol Execution}: The simulator runs the protocol, providing the adversary with the commitments that correspond to the randomly chosen honest secrets. The simulator uses a random oracle to "program" the hash function, so the commitments and revelations for the honest parties are consistent with the random secrets.
				\item \textbf{Adversary's Task \& Win Condition}: These are the same as in $\mathsf{Game}_0$.
				\item \textbf{Indistinguishability}: The adversary's view in $\mathsf{Game}_0$ and $\mathsf{Game}_1$ is computationally indistinguishable because the $\mathsf{Hash}$ function behaves as a random oracle. An adversary cannot distinguish a hash of a high-entropy secret from a hash of a uniformly random string. Therefore, the advantage difference is negligible: $\text{Adv}_0 \approx \text{Adv}_1$.
			\end{itemize}
			
			\item $\mathsf{Game}_2$ (Random Final Output)
			\begin{itemize}
				\item \textbf{Change}: This game is identical to $\mathsf{Game}_1$, except for the final output. Instead of computing $\Omega_o = \mathsf{Hash}(s_1 \| \dots \| s_n)$, the simulator chooses a final output $R^*$ as a uniformly random string from $\{0,1\}^m$ and "programs" the random oracle to return $R^*$ when queried with the concatenated secrets $s_1 \| \dots \| s_n$.
				\item \textbf{Adversary's Task}: The adversary's goal is to predict the $k^{th}$ bit of $R^*$.
				\item \textbf{Win Condition}: The adversary wins if its prediction for $R^*[k]$ is correct.
				\item \textbf{Indistinguishability}: The adversary's view in $\mathsf{Game}_1$ and $\mathsf{Game}_2$ is also computationally indistinguishable. If the hash function is a random oracle, its output is indistinguishable from a truly random string. Thus, the advantage difference between $\mathsf{Game}_1$ and $\mathsf{Game}_2$ is negligible: $\text{Adv}_1 \approx \text{Adv}_2$.
			\end{itemize}
		\end{itemize}
		
		The adversary's advantage in $\mathsf{Game}_2$ is the probability of guessing a single bit of a truly random string. This probability is exactly $1/2$, as each bit of a random string is equally likely to be 0 or 1.
		
		Since $\text{Adv}_0 \approx \text{Adv}_1$ and $\text{Adv}_1 \approx \text{Adv}_2$, it follows that $\text{Adv}_0 \approx \text{Adv}_2$. This means the adversary's advantage in the real protocol ($\mathsf{Game}_0$) is negligibly close to $1/2$. This contradicts our initial assumption that $\text{Adv}_0 > 1/2 + \text{negl}(\lambda)$, thereby proving that the Commit-Reveal$^2$ protocol is bias-resistant.
		
	\end{proof}

\begin{theorem}[Replay Resistance]
In the Commit--Reveal$^2$ protocol (Section~\ref{sec:protocol}), 
each participant $i$ binds its commitment $c_{v,i}$ to a specific execution context by either:
\begin{itemize}
  \item considering the \emph{fully on-chain path}, each participant submits $c_{v,i}$ individually; or
  \item considering the \emph{hybrid path}, producing a digital signature 
  $\sigma_i \leftarrow \mathsf{Sign}_{\mathsf{sk}_i}(M_i)$ on the structured message $$\small{M_i = (
\texttt{chainId},\; \texttt{verContract}, \texttt{round},\; \texttt{attemptId},\; c_{v,i})}$$
 with $c_{v,i}$ anchored in the on-chain Merkle root of commitments.
\end{itemize}

We have the following four assumptions:
\begin{enumerate}
  \item[\textbf{A1}] The signature scheme is existentially unforgeable under chosen-message attack (EUF-CMA).
  \item[\textbf{A2}] The tuple $\small{(\texttt{chainId},\texttt{verContract},\texttt{round},\texttt{attemptId})}$
  is encoded injectively inside $M_i$ (domain separation).
  \item[\textbf{A3}] The Merkle hash function is collision- and second-preimage resistant.
  \item[\textbf{A4}] The contract’s $\mathsf{Seen}$ map is persistent and respects ledger finality.
\end{enumerate}
Then for every PPT adversary $\mathcal{A}$,
\[
\Pr[\mathcal{A}_{\mathsf{replay}}]
\;\le\;
\Adv^{\mathrm{euf\mbox{-}cma}}_{\mathsf{Sig}}(\lambda)
\;+\;
\Adv^{\mathrm{coll/2\mbox{-}pre}}_{\mathcal{H}}(\lambda)
\;+\;
\mathsf{negl}(\lambda),
\]
and hence the protocol is replay resistant.
\end{theorem}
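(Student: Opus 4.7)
The plan is to formalize replay success as an experiment in which the adversary $\mathcal{A}$ causes the contract to accept a commitment together with its authenticating material---either an individual on-chain submission or an off-chain signature anchored by a Merkle proof---whose underlying execution context tuple differs from any in which the honest operator intentionally participated. The strategy is then a case analysis on the two protocol paths stated in the theorem, reducing each branch to exactly one of the four listed assumptions.

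For the hybrid path, a successful replay forces the verifier to accept a signature $\sigma$ on some $M' = (\texttt{chainId}', \texttt{verContract}', \texttt{round}', \texttt{attemptId}', c_{v,i})$ whose first four coordinates differ from every $M$ the honest owner of $\mathsf{sk}_i$ signed. By Assumption A2, the injective encoding guarantees $M' \neq M$ bit-for-bit, so such a $\sigma$ is a fresh forgery. I would construct a reduction $\mathcal{B}$ that embeds its EUF-CMA challenge public key into a guessed operator slot, answers $\mathcal{A}$'s honest-signing requests via its own signing oracle, and outputs $(M', \sigma)$ on success; this bounds the branch by $\Adv^{\mathrm{euf\mbox{-}cma}}_{\mathsf{Sig}}(\lambda)$ up to a polynomial slot-guessing factor absorbed into the negligible slack.

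For the Merkle-anchored branch, the adversary must exhibit an inclusion proof for $c_{v,i}$ against some root $M_v'$ computed for a different round or attempt. If $c_{v,i}$ was not among the original leaves of $M_v'$, any accepted proof supplies a Merkle second preimage; any alternative leaf multiset hashing to the same root is a collision. Both events are bounded by $\Adv^{\mathrm{coll/2\mbox{-}pre}}_{\mathcal{H}}(\lambda)$ via Assumption A3. The fully on-chain path is handled by Assumption A4: the persistent $\mathsf{Seen}$ map rejects any $c_{v,i}$ already recorded under its context, so accepted reuse requires a syntactically new commitment bound to a previously committed context, which again reduces to the hybrid-path forgery case. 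A union bound over the disjoint branches then delivers the stated inequality.

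The main obstacle I expect is reduction bookkeeping in the hybrid case: $\mathcal{B}$ must faithfully emulate the contract's $\mathsf{Seen}$-map updates, the leader's Merkle construction, and the EIP-712 domain binding while routing only the targeted operator's signing queries through its oracle. Care is needed so that every message $\mathcal{A}$ views as honestly signed has been queried exactly once to the oracle, ensuring the pair $\mathcal{B}$ outputs is genuinely fresh relative to the oracle transcript. A smaller subtlety is making the decomposition of the replay event into the two branches disjoint, so that the union bound is tight rather than loose.
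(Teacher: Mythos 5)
Your proposal follows essentially the same route as the paper: split into the on-chain and hybrid paths, reduce cross-context signature reuse to an EUF-CMA forgery via the injective context encoding (A1, A2), reduce commitment-set substitution under a mismatched Merkle root to collision/second-preimage resistance (A3), invoke the persistent $\mathsf{Seen}$ map (A4) for the remainder, and finish with a union bound. Your reduction is in fact more detailed than the paper's sketch (explicit slot-guessing of the target operator, oracle bookkeeping, freshness of the output pair), and that extra care is sound.

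Two points need fixing, though. First, your formalization of the replay event only captures \emph{cross-context} replay (acceptance under a context tuple in which the honest operator never participated). The paper's notion also includes \emph{intra-context} replay, i.e.\ the same authorized data being accepted twice within the same $(\texttt{round},\texttt{attemptId})$; this case is not a forgery and not a hash break, and it is dispatched purely by the $\mathsf{Seen}$ map under A4 in both paths. As written, your decomposition silently excludes it, so your union bound does not cover the full replay event and the disjointness you aim for is over a strictly smaller event. Second, your handling of the fully on-chain path claims that accepted reuse there ``reduces to the hybrid-path forgery case,'' but in the on-chain path participants submit $c_{v,i}$ individually in transactions with no EIP-712 signature over $M_i$; there is no signature to forge, and the binding to context is enforced entirely by the contract's state checks and $\mathsf{Seen}$ (A4), contributing only the $\mathsf{negl}(\lambda)$ term. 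Restating the on-chain branch as an A4-only argument and adding the intra-context case (again via A4) closes both gaps without changing your overall structure.
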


\begin{proof}
A \emph{replay} means the contract accepts data authorized for 
context $\small{(\texttt{chainId},\texttt{verContract},\texttt{round},\texttt{attemptId})}$
either (i) in a different context (cross-context) 
or (ii) twice in the same context (intra-context).

\textbf{On-chain path.}
Acceptance records $\mathsf{Seen}[(\mathsf{addr},\texttt{round},\texttt{attemptId})]\!\leftarrow\!\mathsf{true}$.
Cross-context replay requires acceptance under mismatched $(\texttt{round}',\texttt{attemptId}')$, 
which the contract rejects unless state/finality fails (\textbf{A4}). 
Intra-context replay requires re-acceptance of the same tuple, which $\mathsf{Seen}$ forbids. 
Thus replay advantage is negligible.

\textbf{Hybrid path.}
Acceptance requires (a) a valid signature $\sigma_i$ on $M_i$ 
and (b) a valid Merkle inclusion of $c_{v,i}$.

(i) \emph{Cross-context.} Re-using $(c_{v,i},\sigma_i)$ in a different context 
amounts to verifying $\sigma_i$ on $M_i' \neq M_i$. 
By \textbf{A2}, messages are context-unique; success implies an EUF-CMA forgery (\textbf{A1}).  

(ii) \emph{Intra-context.} Reuse within the same $(\texttt{round},\texttt{attemptId})$ 
requires a second acceptance at $(i,\texttt{round},\texttt{attemptId})$, 
which $\mathsf{Seen}$ rejects (\textbf{A4}).  

(iii) \emph{Commitment-set substitution.} 
Accepting $(c_{v,i},\sigma_i)$ under a different Merkle root requires either 
a false inclusion proof or changing the set without changing the root, 
contradicting collision or second-preimage resistance of $\mathcal{H}$ (\textbf{A3}).  

By union bound over (i)--(iii), the replay advantage is at most
\[
\Adv^{\mathrm{euf\mbox{-}cma}}_{\mathsf{Sig}}(\lambda) + 
\Adv^{\mathrm{coll/2\mbox{-}pre}}_{\mathcal{H}}(\lambda) + 
\mathsf{negl}(\lambda).
\]
\end{proof}

In the following sections, we proceed to validate the design in practice: Section \ref{sec:implementation} outlines the implementation, and Section \ref{sec:cost analysis} measures the resulting gas and fallback overhead.

\section{Implementation}
\label{sec:implementation}

This section details the Commit-Reveal\(^2\) protocol implementation, contrasting fully on-chain and hybrid models. It explains the chosen strategies and gas cost measurement methods, while the appendix provides additional information on Merkle Root construction, EIP-712 typed data hashing, and a link to the full source code can be accessed in the repository:  
\url{https://github.com/tokamak-network/Commit-Reveal2/tree/full-paper}.

\subsection{Merkle Root Construction}

The Commit-Reveal\(^2\) protocol constructs a Merkle Root as a complete binary tree to enable consistent and verifiable computation. Each leaf corresponds to a participant's \texttt{$c_{v}$} values and is sorted by the order of activated nodes to ensure deterministic results. This organized structure allows the root to be efficiently computed and used for validation in later phases. The detailed implementation is provided in Appendix C.

\subsection{Signature Construction}
\label{subsec:signatureconstruction}

The protocol uses EIP-712~\cite{eip712} for signing and hashing data under Ethereum’s cryptographic standards, employing a domain separator and a defined message structure.

The domain separator encodes protocol details such as the \texttt{chainId} and contract address to bind each signature to the intended blockchain and contract. The message structure includes \texttt{round} and \texttt{trialNum} as nonces, which together provide uniqueness and protect against replay attacks. An EIP-712-compliant hash merges these elements into a digest, which is then signed with ECDSA to produce ((v, r, s)). Further details on hash construction are provided in Appendix D.

\subsection{Hybrid Commit-Reveal² Implementation}

This subsection outlines the hybrid implementation of the Commit-Reveal\(^2\) protocol. The hybrid approach minimizes on-chain gas usage by delegating computationally intensive operations to off-chain systems while maintaining integrity and transparency through on-chain validation and random number generation.

\subsubsection{Commit Phase}

In this phase, a participant submits a Merkle Root representing their hashed commitments (\texttt{$c_{v,i}$}) to the smart contract. The Merkle Root is stored on-chain for later validation. The contract ensures that only activated participants can submit the Merkle Root, verified against their activation order.

\begin{lstlisting}[caption={Interface for the Generate Random Number Function}, label={code:generate-random-interface}]
function generateRandomNumber(
    bytes32[] calldata secrets,
    uint8[] calldata vs,
    bytes32[] calldata rs,
    bytes32[] calldata ss
) external;
\end{lstlisting}

\subsubsection{Reveal-2 Phase}

During this phase, the secrets, along with the related signatures are sent to the blockchain, where the smart contract performs all necessary validations before generating the random number. The interface for this phase is provided in Code~\ref{code:generate-random-interface}.

The implementation validates inputs and ensures fairness through the following steps:

\paragraph{Commitment Verification.}
Each \texttt{$c_{o,i}$} is computed by hashing the secret, and each \texttt{$c_{v,i}$} is derived by hashing the corresponding \texttt{$c_{o,i}$}.

\paragraph{Merkle Root Validation.}
The Merkle Root, submitted in Phase 1, is reconstructed from the \(\{c_{v,1}, \ldots, c_{v,n}\}\) and compared with the stored Merkle Root. This ensures that the revealed secrets correspond to the original commitments made during the commitment phase.

\paragraph{Signature Validation.}
Each participant’s secret is authenticated using EIP-712-compliant signatures. The \texttt{ecrecover} function extracts the signer’s public key from their signature, validating that the recovered address is an activated operator address. To prevent signature malleability, the protocol enforces that the \(s\)-value of the ECDSA signature remains within the lower half of the SECP256k1 curve, using the constant \texttt{0x7FFFFFFFFFFFFFFFFFFFFFFFFFFFFFFF5D576E73\allowbreak57A4501DDFE92F46681B20A0}~\cite{eip2}. This constraint eliminates the risk of multiple valid signatures for a single elliptic curve point.

\paragraph{Random Number Generation.}
After all validations, the revealed \texttt{secrets} are combined in the activation order and hashed to compute the final random number. This ensures fairness, unpredictability, and adherence to the protocol's cryptographic guarantees.

\subsection{Fully On-Chain Commit-Reveal\(^2\) Implementation}

The fully on-chain protocol operates in three distinct phases:

\subsubsection{Commit Phase} Participants submit their \texttt{$c_{v,i}$} to the smart contract, which are stored on-chain along with their submission order. This order is later used to verify the sequence in which participants reveal their secrets.

\subsubsection{Reveal-1 Phase} Participants disclose their \texttt{$c_{o,i}$}, which are verified against their corresponding \texttt{$c_{v,i}$}. During this phase, the contract pre-allocates state memory to store the \texttt{$c_{o,i}$} array in the order of the original submissions. The first participant incurs slightly higher gas costs due to this initialization.

\subsubsection{Reveal-2 Phase} Participants reveal their \texttt{$s_{i}$} in a sequence determined by a hash-based random value. The first revealer verifies and stores the reveal order, while the final participant combines all verified \texttt{secrets} to compute the random number.

\section{Cost Analysis}
\label{sec:cost analysis}

This section quantifies the protocol’s gas costs. We first report the normal-path measurements, then evaluate the overhead of fallback (abnormal) paths under participant- and leader-withholding. Finally, we assess the practicality and impact of late-reveal griefing.

\subsection{Experimental Environment}

The experiments were conducted using Foundry, a robust testing framework for Ethereum development. The configuration included enabling the optimizer with \texttt{optimization\_runs} set to its maximum value (\texttt{4294967295}, equivalent to \(2^{32}-1\)), and activating the Intermediate Representation mode (\texttt{via-ir}). Tests were executed on the Cancun EVM version with the \texttt{0.8.30} compiler, utilizing both Solidity and Yul. Gas usage was measured using \texttt{vm.lastCallGas().gasTotalUsed} from Foundry's test cheats library, with the \texttt{isolate} flag enabled to execute each top-level call in a separate EVM context for precise gas accounting and state tracking. 

\noindent\textit{Data availability.} The measurements underlying the figures and tables in this section are available in the public repository; see Appendix B.

\subsection{Normal path cost}

To assess the gas efficiency of the Commit-Reveal\(^2\) implementation, a comparative analysis was performed between the fully on-chain approach and the hybrid off-chain leveraged model. We measured the total gas consumed for a single round of the protocol to provide a clear comparison between the two models.

\begin{figure}[tb]
    \centering
    \includegraphics[width=0.93\linewidth]{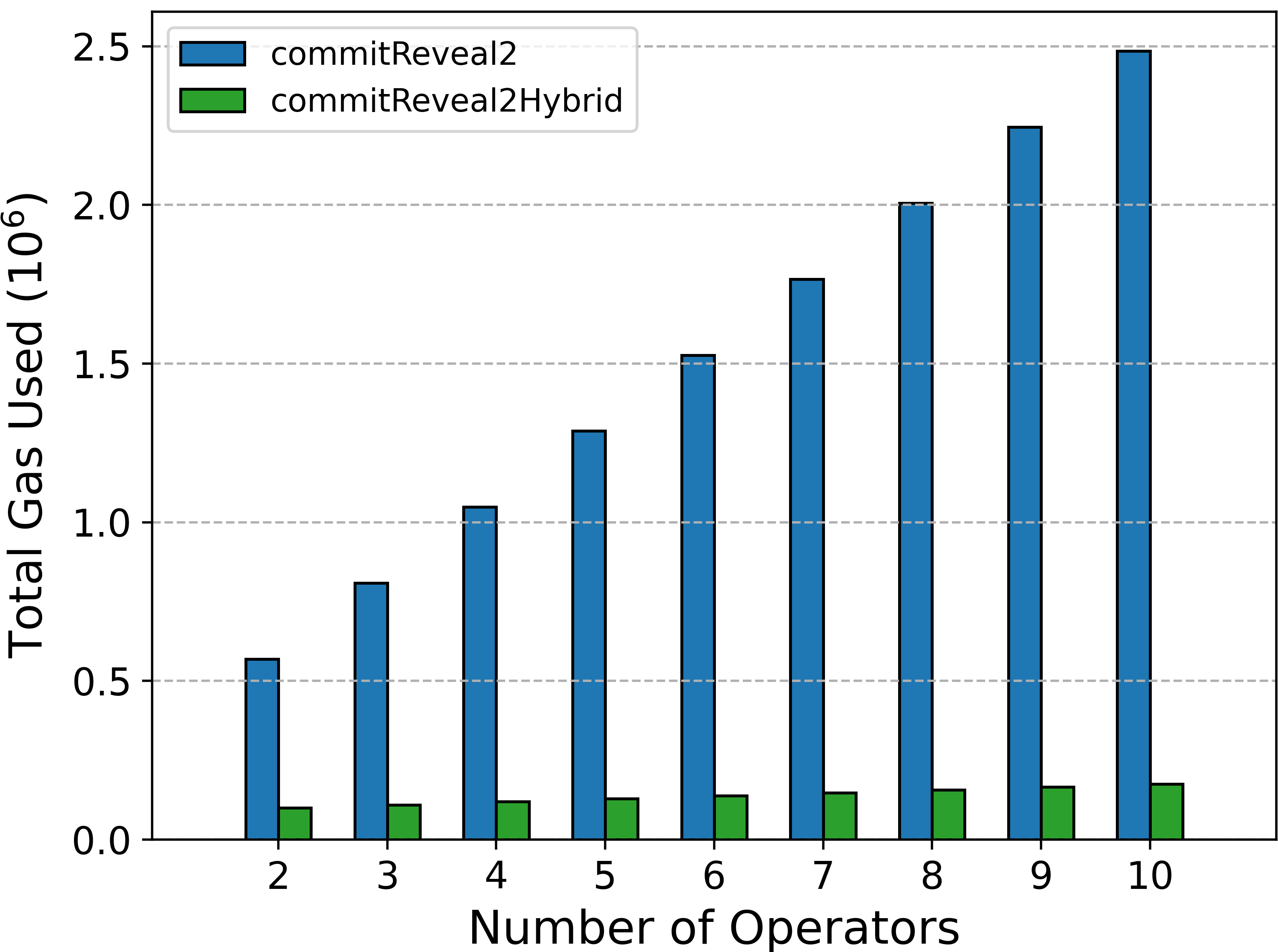}
    \caption{Gas cost comparison between fully on-chain and hybrid models.}
    \label{fig:gas-comparison}
\end{figure}

Figure~\ref{fig:gas-comparison} highlights the dramatic gas cost differences between the two approaches. The fully on-chain implementation starts at approximately 569,412 gas for two operators and exceeds 2,484,566 gas for ten operators, driven by the increasing number of transactions and on-chain storage operations. In contrast, the hybrid model reduces gas costs by over 80\%, starting at around 100,732 gas for two operators and scaling to 175,569 gas for ten operators. This efficiency is achieved by minimizing on-chain storage operations while maintaining transparency and verifiability through cryptographic proofs.

Using recent Ethereum network data (June--August 2025, post-Pectra upgrade, with average gas price of 2.85 Gwei and average Ethereum price of 3,133 USD), the transaction fees for the hybrid model can be expressed in dollars. For example, for three operators, the total gas used is 110,065, resulting in a fee of approximately 0.98 USD. The hybrid model's efficient gas consumption of only 110,065 units for three operators highlights its cost-effectiveness, making it a practical solution for scalable blockchain applications.

\begin{figure}[!t]
    \centering
    \includegraphics[width=\linewidth]{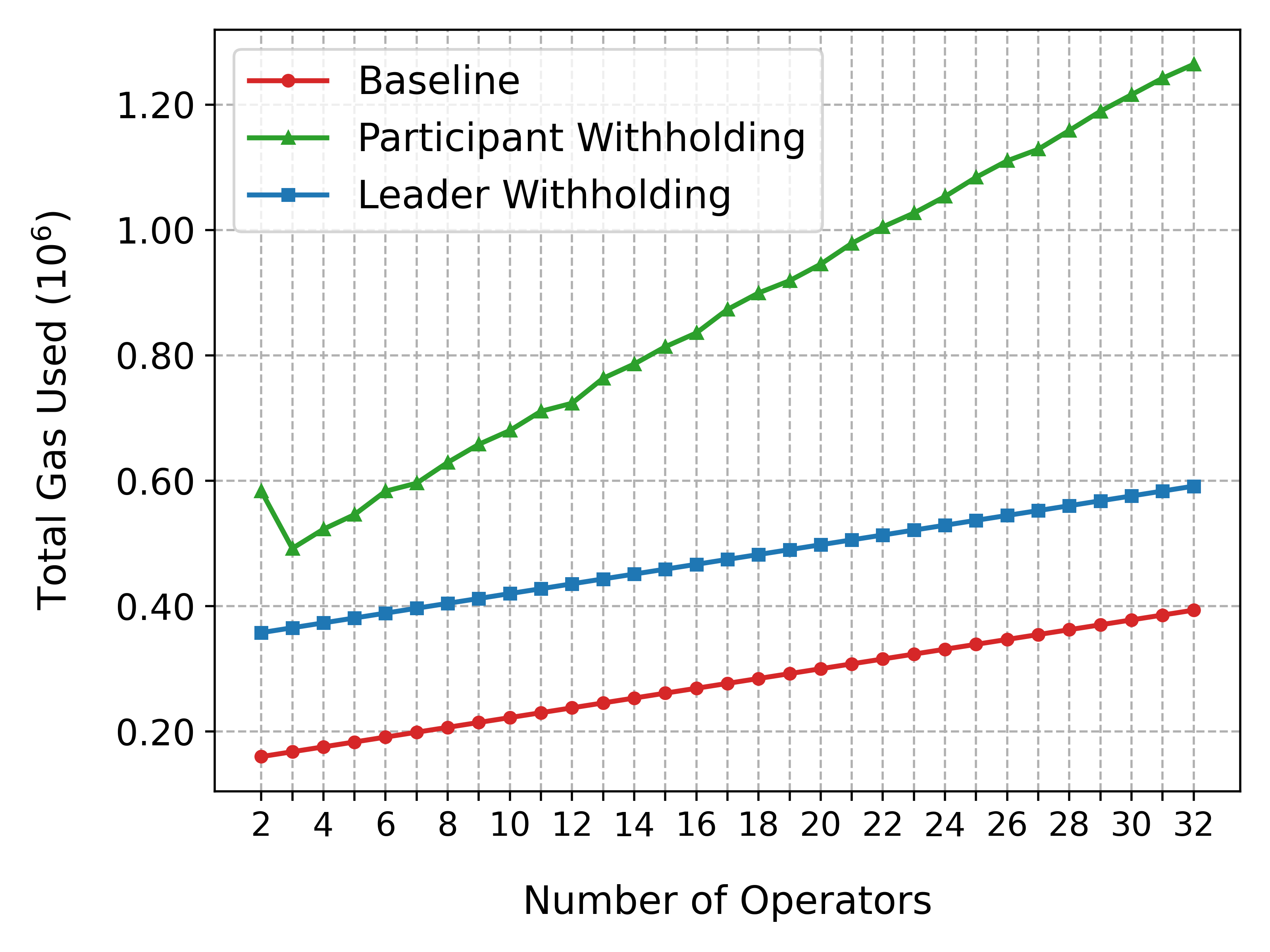}
    \caption{Abnormal‑path (fallback enabled) gas usage by the number of operators. The precise routes used in measurement for The baseline, participant‑withholding, and leader‑withholding paths are listed in Appendix E.}
    \label{fig:abnormal-full}
\end{figure}

\begin{table*}[htb]
\centering
\caption{Representative abnormal‑path costs (gas). $\Delta$ is the increase over the baseline; ratios are scenario/baseline.}
\label{tab:abnormal-cost}
\small
\begin{tabular}{c|c|c|c|c|c|c|c}
\toprule
$n$ & Baseline &
\shortstack{\textbf{Participant}\\\textbf{Withholding}} &
\shortstack{\textbf{Leader}\\\textbf{Withholding}} &
\shortstack{$\boldsymbol{\Delta}$\\\textbf{(PW)}} &
\shortstack{$\boldsymbol{\Delta}$\\\textbf{(LW)}} &
\shortstack{\textbf{PW}/\\\textbf{B}} &
\shortstack{\textbf{LW}/\\\textbf{B}} \\
\midrule
2  & 159{,}969 & 583{,}609 & 357{,}751 & 423{,}640 & 197{,}782 & 3.65$\times$ & 2.24$\times$ \\
10 & 222{,}251 & 680{,}218 & 420{,}052 & 457{,}967 & 197{,}801 & 3.06$\times$ & 1.89$\times$ \\
20 & 300{,}159 & 945{,}530 & 497{,}933 & 645{,}371 & 197{,}774 & 3.15$\times$ & 1.66$\times$ \\
32 & 393{,}621 & 1{,}264{,}275 & 591{,}403 & 870{,}654 & 197{,}782 & 3.21$\times$ & 1.50$\times$ \\
\bottomrule
\end{tabular}
\end{table*}

\subsection{Abnormal Path Cost}
\label{subsec:abnormal-cost}
When the protocol leaves its normal off‑chain path and enters the fallback mechanism, a fixed amount of verification must be performed on chain even if no one ultimately misbehaves. To make that overhead visible, we compare three paths that capture the two principal failure classes introduced in Section~\ref{sec:fallback}: a baseline where everyone eventually cooperates (with the fallback logic enabled), a case where a participant (not a leader) withholds, and a case where the leader fails once and then service resumes. Because each failure case takes multiple concrete routes depending on timing and what is already recorded on chain, our measurements use one representative route per case. The exact call sequences used are listed in Appendix E.

In the normal path cost comparison, the baseline does not include fallback checks, whereas here they are enabled for all three curves. The “baseline” used here is not the hybrid normal path shown earlier in Figure~\ref{fig:gas-comparison}. There, the Commit‑Reveal\(^2\) hybrid flow was measured \emph{without} fallback. Here, the fallback machinery is enabled even when everyone cooperates: the contract still initializes timers/flags and maintains minimal state so it could invoke fallback if needed, and that fixed bookkeeping costs gas each round. As a result, the baseline rises from $\approx$175{,}569 to 222{,}251 gas at $n{=}10$ (+$\approx$46.7k), and from $\approx$100{,}732 to 159{,}969 at $n{=}2$ (+$\approx$59.2k).

Figure~\ref{fig:abnormal-full} plots gas usage against the number of operators, and Table~\ref{tab:abnormal-cost} lists representative points. The Leader‑withholding (LW) curve closely follows the baseline with an almost constant premium of about 198k gas across the measured range: one failed attempt, a resume, and a second try add fixed work but do not change how costs scale with the operator set. By contrast, the Participant‑Withholding (PW) curve grows more quickly. When a participant forces the fallback, the chain performs round‑level validation once per operator: it applies EIP‑712 signature checks when a commitment has not yet appeared on chain, verifies the globally agreed reveal order that ensures deterministic finalization (including the last‑revealer reward), and confirms that the commitments as a set are consistent with the previously committed Merkle root. Because these steps are applied across the operator set, they add an almost constant amount of gas per operator on top of the baseline, which explains the near‑linear slope observed in the plot. The two‑operator case shows an extra jump because the set briefly dips below the minimum threshold and a short recovery is required before progressing.

To summarize, our analysis shows that although the fallback mechanism preserves correct randomness, it imposes additional costs and a bounded delay on other participants. Because this shifts burden while preserving the protocol’s liveness and correctness guarantees, we treat it as griefing and examine its practicality and impact in the next subsection.

\subsection{Practicality of Griefing Attack}
\label{subsec:griefing}

A participant can deliberately stay silent off chain so that the protocol must fall back to its on‑chain procedure, and then still submit within the allowed on‑chain period. Because the submission does arrive, this behavior does not trigger slashing. Its impact is twofold: it shifts extra work to others (mainly the leader, who must open the on‑chain request and verify the round on chain) and it slows the round by a bounded amount determined by the protocol’s period parameters.

The measurements show that the leader’s on‑chain request scales almost perfectly linearly with the size of the operator set. A least‑squares fit over \(n\in[3,32]\) is
\begin{equation}
  G(n) \;=\; \mathbf{18{,}563}\,n \;+\; \mathbf{108{,}808}\ \text{gas},
  \label{eq:reqs-fit}
\end{equation}
with a coefficient of determination \(R^2>0.99999\). The slope comes from work that is applied once to each operator in the round: when a commitment has not appeared on chain it must be accompanied by an EIP‑712 signature and checked against the activated operator address; the reveal order for the round is validated to ensure deterministic finalization (and the last‑revealer reward); and the set of commitments is checked against the previously committed Merkle root. The intercept reflects fixed round overhead that is present regardless of \(n\).

By contrast, the griefer issues a single on‑chain submission. In our implementation, once the protocol falls back on chain, finalization—computing the random number and invoking the consumer’s callback—runs together with the \emph{last} secret submission in the fixed reveal order. If the griefer is that last submitter, this finalization work is attached to his transaction. The finalization term itself grows only mildly with the operator set (about 2.18k gas per operator; e.g., 125{,}661 at $n{=}2$, 130{,}021 at $n{=}4$, 136{,}561 at $n{=}7$), whereas the leader’s on‑chain request remains the dominant component with \(\approx\)18.6k gas per operator (Eq.~\ref{eq:reqs-fit}). Table~\ref{tab:griefing-s} lists representative values and shows how the asymmetry widens as \(n\) increases.

\begin{table}[htb]
\centering
\caption{Late on‑chain reveal (fallback) case: leader vs.\ attacker gas at representative \(n\).}
\label{tab:griefing-s}
\small
\setlength{\tabcolsep}{4pt}
\renewcommand{\arraystretch}{1.08}
\begin{tabular}{r|c|c|c}
\toprule
$n$ & \shortstack{Leader\\request (gas)} & \shortstack{Attacker\\(one submit)} & \shortstack{Leader/\\Attacker $\ge$} \\
\midrule
3  & 164{,}517 & 46{,}821 & 3.51$\times$ \\
10 & 294{,}413 & 46{,}821 & 6.29$\times$ \\
20 & 480{,}081 & 46{,}821 & 10.25$\times$ \\
32 & 702{,}839 & 46{,}821 & 15.01$\times$ \\
\bottomrule
\end{tabular}
\end{table}

Summing up, the behavior matches a classic griefing pattern: no direct attacker gain, cost-shifting to others, and bounded delay with the same final outcome. Accountability and eventual liveness are preserved, and the responsible party is identifiable on-chain. A simple extension could split the on-chain request cost between the leader and the late submitter (e.g., 50–50) without affecting asymptotic scaling. We leave this refinement to future work.

\section{Discussions} \label{sec:discussion}

\subsection{Enhanced Considerations for BLS Integration in Smart Contracts}
Our work demonstrates significant gas savings through off-chain computations; however, as the number of participants grows, the linear complexity of individual signature verifications can escalate costs. To address this, the Boneh-Lynn-Shacham (BLS) signature scheme \cite{boneh2001short} offers a means to aggregate and verify multiple signatures in a single operation. Prior research suggests that BLS verification becomes cost-effective when verifying at least 38 signatures \cite{liang2023bls}. While this provides a useful guideline, practical adoption should consider the specific scale and operational conditions of the target protocol—such as typical participant counts, expected verification frequencies, and on-chain resource constraints. Additionally, the pairing-based computations required for BLS incur non-negligible costs. Thus, before integrating BLS, designers should weigh these overheads against the scalability benefits and examine potential optimizations (e.g., efficient libraries or hardware accelerations) that might mitigate the computational load.

\subsection{Applicability to the Commit-Reveal-Recover Scheme}
While the Commit-Reveal mechanism can generate randomness even under partial withholding, its liveness issues remain structurally challenging. A well-known solution is the Commit-Reveal-Recover scheme, which employs VDF-based timed commitments to recover randomness as long as a minimum threshold of honest participants exists. However, practical concerns—limited availability of VDF-specific ASICs and potential parallelization attacks \cite{leurent2023analysis, barakbayeva2024srng} that weaken VDF security—currently hinder its widespread use. Despite these hurdles, our protocol’s straightforward hash-based commitments can be replaced by timed commitments, integrating seamlessly into a Commit-Reveal-Recover framework. This integration would allow the protocol to maintain robustness against withholding, mitigate delays, and deter last revealer attacks.

\subsection{Decentralized Leader Election for Enhanced Liveness}
Upon a confirmed leader failure, the protocol slashes the leader and enters a \textsc{HALTED} state. In the current recovery model, progress depends on the original leader replenishing the deposit and calling \texttt{resume()}, which creates a single point of failure for liveness if the leader remains unresponsive.
To address this, we propose an in-protocol, low-overhead mechanism that restores liveness without external governance while preserving accountability and slashing as a future work. The design remains compatible with existing Commit-Reveal\textsuperscript{2} verification paths (e.g., EIP-712 signatures and Merkle-based checks) and uses a simple, deterministic selection rule.

After the system enters \textsc{HALTED} and a fixed election delay elapses, any compliant operator may start an \emph{election round}. The remaining active operators execute a lightweight, on-chain \emph{commit--reveal mini-round} to derive election randomness \(R_{\text{elec}}\). A new leader is then chosen as a deterministic function of \(R_{\text{elec}}\) and the candidate set (for example, \(\arg\min_{i}\ \mathrm{Hash}(R_{\text{elec}}\parallel \textsf{addr}_i)\)); optionally, stake-weighted selection can be applied. Finalization is performed by the elected leader in a single transaction, and round identifiers/timestamps prevent duplicate elections within the same recovery instance. Because this procedure reuses the existing verification logic, both implementation effort and gas overhead remain modest.

If participation falls below a threshold, the election round expires and may be retried with exponential backoff. The mechanism coexists with consumer refunds in the \textsc{HALTED} state and maintains the original accountability rules including slashing. It introduces a one-off, on-demand cost and does not change the asymptotic scaling of the normal or fallback paths. A full parameter study (timeouts, thresholds, weighting) and a detailed gas evaluation should be followed.

\section{Conclusion}
\label{section:conclusion}

We introduced \emph{Commit--Reveal\textsuperscript{2}}, a layered commit--reveal protocol that cryptographically randomizes the reveal order and mitigates the strategic leverage of the last revealer. The design is paired with an on-chain accountability framework (slashing, dispute handling, and fallbacks) that targets real-world deployments while preserving auditability. We formalized unpredictability and bias-resistance under standard assumptions and released a hybrid implementation that keeps verification on chain while offloading routine work off chain, yielding more than 80\% gas savings over a fully on-chain baseline for small to moderate operator sets.

Empirically, the normal path maintains a low and stable gas profile, and enabling fallbacks raises the baseline modestly due to the bookkeeping required for accountability. Under participant withholding, overhead grows roughly with the operator set size. Occasional late on-chain submissions can shift cost and add bounded delay without changing the outcome; even in such cases, the protocol preserves accountability and conditional liveness, and users are protected by a refund mechanism while the system is paused.

Potential directions for future work include refining an in-protocol leader replacement to reduce reliance on the original leader for recovery, evaluating when on-chain aggregate BLS verification becomes cost-effective at larger scales, and exploring interoperability with variants that incorporate timed commitments. These directions may improve conditional liveness and cost efficiency without changing how the protocol scales as the system grows.

\section*{Acknowledgement}
The authors would like to thank Aryan Soni from Tokamak Netowrk for implementing the backend that interacted with the smart contract and assisting with functional testing.

\bibliographystyle{IEEEtran}
\bibliography{references}

\appendix

\subsection{Unpredictability Proof for Commit-Reveal Protocol}\label{appendix:cr_proof}
	
	\begin{theorem}
		The Commit-Reveal protocol is unpredictable if the cryptographic hash function behaves as a random oracle.
	\end{theorem}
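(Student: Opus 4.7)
The plan is to prove unpredictability of the basic Commit--Reveal protocol by a standard game-hopping reduction that leans on the random-oracle behavior of $\mathcal{H}$. I fix at least one honest participant, denote its secret by $s$, and consider a PPT adversary $\mathcal{A}$ that controls the remaining $n-1$ parties, observes the commit phase together with all past beacon transcripts, makes at most $q$ queries to the oracle $\mathcal{H}$, and outputs a prediction $\Omega'$ for the round output $\Omega = \mathcal{H}(s_1 \| \cdots \| s_n)$. The prediction must be committed before the honest reveal of $s$, matching the adversarial window implicit in Definition~\ref{def:unpred} when specialized to the immediate next round.

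I would introduce three games. Game $G_0$ is the real execution: the honest $s$ is drawn from its high-entropy source, $\mathcal{H}(s)$ is posted as the commitment, and $\mathcal{A}$ runs with oracle access to $\mathcal{H}$. In Game $G_1$ the honest secret is replaced by a freshly sampled uniform string $\tilde{s}$ drawn from the same distribution, with the oracle programmed lazily so that $\mathcal{H}(\tilde{s})$ equals the originally posted commitment value. In Game $G_2$ the final response $\mathcal{H}(s_1 \| \cdots \| s_{n-1} \| \tilde{s})$ is itself replaced by a uniformly fresh string, returned only after the honest reveal.

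Bounding the transitions is routine in the random-oracle model. The hop $G_0 \to G_1$ is perfectly indistinguishable unless $\mathcal{A}$ queries $\mathcal{H}$ on the honest preimage, an event whose probability is at most $q \cdot 2^{-\kappa}$ where $\kappa$ is the min-entropy of the honest secret, since the posted commitment is an opaque oracle tag. The hop $G_1 \to G_2$ diverges only if $\mathcal{A}$ had already queried the concatenation $s_1 \| \cdots \| s_{n-1} \| \tilde{s}$ before the reveal, which likewise requires guessing $\tilde{s}$ and contributes the same order of magnitude. In $G_2$ the beacon output is a uniformly fresh $m$-bit string that is information-theoretically independent of $\mathcal{A}$'s view at prediction time, so $\Pr[\Omega' = \Omega] \le 2^{-m}$; summing the three terms yields a negligible overall advantage in $\lambda$, as required.

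The chief obstacle I anticipate is pinning down the scheduling of $\mathcal{A}$'s prediction relative to the honest reveal. If $\mathcal{A}$ were permitted to observe $s$ before finalizing $\Omega'$, the statement would be vacuously false; the theorem only becomes meaningful once the prediction is locked in before the final honest input is released. I would also need to rule out adaptive strategies in which $\mathcal{A}$ strategically delays some of its own reveals so as to interleave further oracle queries with partially released honest information; this can be addressed by treating the honest commit phase as a single atomic step of the game and by noting that $\mathcal{A}$'s own secrets carry no information about $s$ beyond what is already encoded in the oracle transcript. Once this scheduling is formalized, the hybrid argument above proceeds mechanically.
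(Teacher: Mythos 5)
Your proposal follows essentially the same route as the paper's own proof in Appendix A: a three-game hybrid in the random-oracle model (real execution, decoupling the honest commitment from the honest secret via oracle programming, then replacing the final hash output with a uniform string), concluding that the prediction advantage collapses to $2^{-m}$ plus negligible hop losses. Your version is correct and in fact slightly more explicit than the paper's, since you quantify the bad events with concrete $q\cdot 2^{-\kappa}$ bounds and pin down the scheduling of the prediction relative to the honest reveal.
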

	
	\begin{proof}
		We prove unpredictability using the game-hopping technique in the random oracle model. We assume, for contradiction, that a probabilistic polynomial-time (PPT) adversary $\mathcal A$ can predict the final randomness $R$ of the Commit-Reveal protocol with a non-negligible advantage. We define a sequence of games, where the first game is the real protocol and the last game is a purely random process. We will show that no PPT adversary can distinguish between any two adjacent games, which implies that the adversary's initial non-negligible advantage is wrong.
		
		Let $n$ be the number of parties, and we assume at least one party is honest. The adversary $\mathcal A$ controls at most $n-1$ parties, and one party, $\mathcal P_h$, is honest.
		
		\begin{itemize}
			\item \textbf{Game 0 (The Real Protocol)} 
			\begin{itemize}
				\item \textbf{Commit Phase:} For each party $\mathcal P_i$, ($i=1,\dots,n-1$), the adversary $\mathcal A$ chooses a secret (random or not) $r_i$ and computes $c_i=\mathsf{Hash}(r_i)$. For the honest party $\mathcal P_h$, a secret $r_h$ is chosen uniformly at random. The commitment $c_h=\mathsf{Hash}(r_h)$ is computed. All commitments are broadcast.
				\item \textbf{Adversary's Guess:} Before the reveal phase, the adversary $\mathcal A$ outputs a guess $R_{\mathcal A}$ for the final randomness.
				\item \textbf{Reveal Phase:} All parties reveal their secrets. The final randomness is computed as $R=\mathsf{Hash}(r_1||\dots||r_{n-1}||r_h)$.
				\item \textbf{Win Condition:} The adversary wins if $R_{\mathcal A}=R$.
				\item \textbf{Advantage:} The adversary's advantage in this game is defined as $\mathsf{Adv}_0=\Pr[R_{\mathcal A}=R]$.
			\end{itemize}
			
			\item \textbf{Game 1 (Random Honest Commitment)} 
			\begin{itemize}
				\item \textbf{Change:} This game is identical to $\mathsf{Game}_0$, except for the honest party's commitment. Instead of computing $c_h=\mathsf{Hash}(r_h)$, the simulator chooses $c_h$ as a uniformly random string from $\{0,1\}^m$.
				\item \textbf{Commit Phase:} The adversary's commitments $c_i$ are computed as in $\mathsf{Game}_0$. For the honest party $\mathcal{P}_h$, a random $r_h$ is chosen, but the commitment $c_h$ is a randomly chosen string.
				\item \textbf{Reveal Phase:} To maintain consistency, the simulator uses a random oracle to "program" the hash function. When the hash oracle is queried with $r_h$, it is made to return the pre-chosen random commitment $c_h$. The verification step $c_h=\mathsf{Hash}(r_h)$ will pass.
				\item \textbf{Adversary's Guess \& Win Condition:} These are the same as in $\mathsf{Game}_0$.
				\item \textbf{Indistinguishability:} The adversary's view in $\mathsf{Game}_0$ and $\mathsf{Game}_1$ is computationally indistinguishable because Hash acts as a random oracle. The only difference is whether $c_h$ is a true hash of a random value or just a random value itself. If an adversary could distinguish between these two scenarios with non-negligible probability, it would break the pseudorandomness property of the hash function. Therefore, the advantage difference is negligible: $\mathsf{Adv}_0\approx\mathsf{Adv}_1$.
			\end{itemize}
			
			\item \textbf{Game 2 (Random Final Output)} 
			\begin{itemize}
				\item \textbf{Change:} This game is identical to $\mathsf{Game}_1$, except for the computation of the final randomness. Instead of computing $R=\mathsf{Hash}(r_1||\dots||r_{n-1}||r_h)$, the simulator chooses $R^*$ as a uniformly random string from $\{0,1\}^m$.
				\item \textbf{Commit and Reveal Phases:} These are the same as in $\mathsf{Game}_1$.
				\item \textbf{Final Output:} The final output is simply $R^*$. The simulator programs the random oracle to return $R^*$ when queried with the concatenated secrets $r_1||\dots||r_{n-1}||r_h$, maintaining consistency.
				\item \textbf{Adversary's Guess \& Win Condition:} The adversary guesses $R_{\mathcal A}$ and wins if $R_{\mathcal A}=R^*$.
				\item \textbf{Indistinguishability:} The adversary's view in $\mathsf{Game}_1$ and $\mathsf{Game}_2$ is also computationally indistinguishable because the final output is either a hash of a random-looking string (from $\mathsf{Game}_1$) or a truly random string (from $\mathsf{Game}_2$). If an adversary could distinguish these, it would again violate the random oracle model. Therefore, the advantage difference is negligible: $\mathsf{Adv}_1\approx\mathsf{Adv}_2$.
			\end{itemize}
		\end{itemize}
		
		The adversary's advantage in $\mathsf{Game}_2$ is the probability of guessing a random string $R^*$ of length $m$. This probability is $2^{-m}$, which is negligible for a sufficiently large $m$.
		
		Since $\mathsf{Adv}_0\approx\mathsf{Adv}_1$ and $\mathsf{Adv}_1\approx\mathsf{Adv}_2$, we have $\mathsf{Adv}_0\approx\mathsf{Adv}_2$3. Therefore, the adversary's advantage in the original protocol ($\mathsf{Game}_0$) is negligible. This proves that the Commit-Reveal protocol is unpredictable.
	\end{proof}
	
\subsection{Code Repository} \label{appendix:repository}

The complete source code for both the fully on-chain and hybrid implementations of the Commit-Reveal\(^2\) protocol, along with test scripts, utilities, and experimental configurations, can be accessed in the repository:  
\url{https://github.com/tokamak-network/Commit-Reveal2/tree/full-paper}.
The gas reports and consolidated series used to generate the manuscript’s figures and tables are included in the repository (see the \texttt{output/} folder).

\subsection{Merkle Root Construction Code} \label{appendix:merkle-root}

The Merkle Root computation in the Commit-Reveal\(^2\) protocol ensures deterministic results by arranging the leaves in the order of activated nodes. The following code snippet demonstrates the \texttt{createMerkleRoot} function, which iteratively hashes the leaves to compute the Merkle Root:

\begin{lstlisting}[caption=Merkle Root Construction, label={code:merkle-root}]
function _createMerkleRoot(bytes32[] memory leaves) internal pure returns (bytes32 r) {
        assembly ("memory-safe") {
            let leavesLenInBytes := shl(5, mload(leaves))
            let hashCountInBytes := sub(leavesLenInBytes, 0x20) // unchecked sub, check outside of this function
            let hashes := mload(0x40)
            mstore(hashes, hashCountInBytes)
            let hashDataPtr := add(hashes, 0x20)
            let leafDataPtr := add(leaves, 0x20)
            let leafPosInBytes
            let hashPosInBytes
            for { let i } lt(i, hashCountInBytes) { i := add(i, 0x20) } {
                switch lt(leafPosInBytes, leavesLenInBytes)
                case 1 {
                    mstore(0x00, mload(add(leafDataPtr, leafPosInBytes)))
                    leafPosInBytes := add(leafPosInBytes, 0x20)
                }
                default {
                    mstore(0x00, mload(add(hashDataPtr, hashPosInBytes)))
                    hashPosInBytes := add(hashPosInBytes, 0x20)
                }
                switch lt(leafPosInBytes, leavesLenInBytes)
                case 1 {
                    mstore(0x20, mload(add(leafDataPtr, leafPosInBytes)))
                    leafPosInBytes := add(leafPosInBytes, 0x20)
                }
                default {
                    mstore(0x20, mload(add(hashDataPtr, hashPosInBytes)))
                    hashPosInBytes := add(hashPosInBytes, 0x20)
                }
                mstore(add(hashDataPtr, i), keccak256(0x00, 0x40))
            }
            mstore(0x40, add(hashDataPtr, hashCountInBytes)) // update the free memory pointer
            r := mload(add(hashDataPtr, sub(hashCountInBytes, 0x20)))
        }
    }
\end{lstlisting}

\subsection{EIP-712 Typed Data Hash Construction Code} \label{appendix:typed-data-hash}

The following code illustrates the EIP-712 compliant hash construction used for securely signing structured data in the Commit-Reveal\(^2\) protocol:

\begin{lstlisting}[caption=EIP-712 Typed Data Hash Construction, label={code:typed-data-hash}]
bytes32 hashTypedDataV4 = keccak256(
    abi.encodePacked(
        hex"19_01",
        keccak256(
            abi.encode(
                keccak256(
                    "EIP712Domain(string name,string version,uint256 chainId,address verifyingContract)"
                ),
                keccak256(bytes("Commit Reveal2")),
                keccak256(bytes("1")),
                block.chainid,
                address(s_commitReveal2)
            )
        ),
        keccak256(
            abi.encode(
                keccak256(
                    "Message(uint256 round,uint256 trialNum,bytes32 cv)"
                ),
                Message({
                    round: round,
                    trialNum: trialNum,
                    cv: cv
                })
            )
        )
    )
);
\end{lstlisting}

\subsection{Measured Abnormal‑Path Routes}
\label{appendix:abnormal-paths}

\providecommand{\flowsep}{\nobreak\hspace{0.15em}$\rightarrow$\hspace{0.15em}\allowbreak}
\providecommand{\SMR}{\texttt{submit\allowbreak Merkle\allowbreak Root()}}
\providecommand{\GenR}{\texttt{generate\allowbreak Random\allowbreak Number()}}
\providecommand{\ReqS}{\texttt{request\allowbreak To\allowbreak Submit\allowbreak S()}}
\providecommand{\FailS}{\texttt{fail\allowbreak To\allowbreak Submit\allowbreak S()}}
\providecommand{\FailReqOrGen}{\texttt{fail\allowbreak To\allowbreak Request\allowbreak S\allowbreak Or\allowbreak Generate\allowbreak Random\allowbreak Number()}}
\providecommand{\DepAct}{\texttt{deposit\allowbreak And\allowbreak Activate()}}
\providecommand{\Resume}{\mbox{\texttt{resume()}}}
\providecommand{\SubS}{\texttt{submit\allowbreak S()}}

\paragraph{Baseline (fallback enabled, used in Fig.~\ref{fig:abnormal-full} / Table~\ref{tab:abnormal-cost})}
\noindent
\SMR\flowsep \GenR.

\paragraph{Participant‑withholding route (used)}
\noindent
\SMR\flowsep \ReqS\flowsep \FailS\flowsep \SMR\flowsep \GenR.\\
For $n{=}2$ only:
\SMR\flowsep \ReqS\flowsep \FailS\flowsep \DepAct\flowsep \Resume\flowsep \SMR\flowsep \GenR.

\paragraph{Leader‑withholding route (used)}
\noindent
\SMR\flowsep \FailReqOrGen\flowsep \Resume\flowsep \SMR\flowsep \GenR.

\paragraph{Variant not plotted (treated as griefing in Section~\ref{subsec:griefing})}
\noindent
\SMR\flowsep \ReqS\flowsep \SubS.

\end{document}